\begin{document}

\title{An Efficient Algorithm For Chinese Postman Walk on Bi-directed de Bruijn Graphs} 

\titlerunning{Fast CPP Algorithms}

\author{Vamsi Kundeti \and Sanguthevar Rajasekaran \and Heiu Dinh}
\institute{Department of Computer Science and Engineering\\
University of Connecticut\\
Storrs, CT 06269, USA\\
\mailsa\\
} \maketitle

\begin{abstract}
Sequence assembly from short reads is an important problem in biology. It is known that solving the sequence assembly problem exactly on a bi-directed de Bruijn
graph or a string graph is intractable. 
However finding a Shortest Double stranded DNA string (SDDNA) containing all the $k$-long 
words in the reads seems to be a good heuristic to get close to the original genome. This problem is equivalent to finding a cyclic Chinese Postman (CP) walk on the underlying un-weighted 
bi-directed de Bruijn graph built from the reads. The Chinese Postman walk Problem (CPP) is solved by reducing it to a general bi-directed flow on this graph which runs in $O(|E|^2\log^2(|V|))$ time.  

\vspace{0.1in}

In this paper we show that the cyclic CPP on bi-directed graphs can be solved without reducing
it to bi-directed flow. We present a $\Theta(p(|V|+|E|)\log(|V|) + (d_{max}p)^3 )$ time
algorithm to solve the cyclic CPP on a weighted bi-directed de Bruijn graph, where 
$p = \max\{|\{v | d_{in}(v)-d_{out}(v) > 0\}|, |\{ v | d_{in}(v) - d_{out}(v) < 0\}|\}$
and $d_{max} = \max\{ |d_{in}(v)-d_{out}(v)\}$. Our algorithm performs asymptotically
better than the bi-directed flow algorithm when the number of {\em imbalanced} nodes 
$p$ is much less than the nodes in the bi-directed graph. From our experimental
results on various datasets, we have noticed that the value of $p/|V|$ lies between $0.08\%$ and $0.13\%$ with $95\%$ probability.

\vspace{0.1in}

Many practical bi-directed de Bruijn graphs do not have cyclic CP walks. In such cases
it is not clear how the bi-directed flow can be useful in identifying contigs. Our
algorithm can handle such situations and identify maximal bi-directed sub-graphs
that have CP walks. A $\Theta(p(|V|+|E|))$ time heuristic algorithm based on these ideas
has been implemented for the SDDNA problem. This algorithm was tested on short reads
from a plant genome and achieves an approximation ratio of at most $1.0134$. We also present a $\Theta((|V|+|E|)\log(V))$ time algorithm for the single 
source shortest path problem on bi-directed de Bruijn graphs, which may be of 
independent interest. 
\end{abstract}

\section{Introduction}

Sequencing the human genome was one of the major scientific breakthroughs in the 
last seven years. Analysis of the sequenced genome can give us vital information about the
expression of genes, which in turn can help scientists to develop drugs for diseases. Thus
sequencing the genome of an organism is of fundamental importance in both medicine and biology.
Unfortunately the technology used in major human genome sequencing projects -- Human Genome
Project (HGP)~\cite{hgp_project} and Celera~\cite{celera_project}, was too expensive to be 
adopted in a large scale. This led to the research on {\em next-generation sequencing} 
methods. {\em Pyro-sequencing} technologies such as SOLiD, 454 and Solexa generate a large number of
short reads which are very accurate. Directed de Bruijn graph based sequence assembly 
algorithms such as \cite{velvet08} and \cite{pevzner01} seem to handle these short read data efficiently 
compared to the string graph based algorithms (see e.g., \cite{myers05}). Unfortunately solving the sequence 
assembly problem exactly on both these graph models seems intractable~\cite{bidirected_graph}. 
However heuristics such as finding a shortest string which includes all the $k$-mers (sub strings 
of length $k$) seem to yield results close to the original genome. In the case of directed de Bruijn graphs finding an Eulerian tour seems to yield good results. If the graph is not Eulerian then a Chinese Postman (CP) tour has been suggested in~\cite{pevzner01}.
 To account for the double strandedness of the DNA molecule we need to simultaneously search for two complimentary CP tours. 
In~\cite{bidirected_graph} the directed de Bruijn graphs are replaced with bi-directed de Bruijn 
graphs to find two complimentary CP tours simultaneously. A CP tour on the {\em un-weighted} 
bi-directed graph constructed from the reads serves as a solution to the {\em Shortest Double Stranded DNA}  string (SDDNA) problem. The solution presented 
in~\cite{bidirected_graph} solves the SDDNA problem by reducing it to a general weighted bi-directed 
flow problem. This algorithm runs in $O(|E|^2\log^2(V))$ time. 

In this paper we present algorithms for SDDNA/CPP  on bi-directed de Bruijn graphs without
using a bi-directed flow algorithm. Our algorithms are based on identifying shortest bi-directed
paths and use of weighted bi-partite matching. Our algorithms perform asymptotically better than the bi-directed flow algorithm when the {\em imbalanced} nodes in the bi-directed graphs are much smaller in number than $|V|$. This restriction seems to be true in practice from what we have observed in our experiments. On 
the other hand it turns out that in many practical situations these bi-directed de Bruijn graphs 
fail to have {\em cyclic} CP tours. In these cases it is not clear how the bi-directed flow 
algorithm~\cite{bidirected_graph} can help us in identifying a set of {\em contigs} covering every 
$k$-long word at least once. In contrast to this flow algorithm, our algorithm can be useful
in obtaining a {\em minimal} set of contigs when a {\em cyclic} CP tour does no exist. We now
summarize our results as follows. Firstly our deterministic algorithm to solve the {\em cyclic}
CPP on a general bi-directed graph takes $\Theta(p(|V|+|E|)\log(|V|) + (d_{max}\,p)^3)$ time, 
where $d_{max} = \max\{|d_{in}(v)-d_{out}(v)|, v\in V\}$, $p = \max\{|V^+|, |V^-|\}$, 
$V^+ = \{v | v\in V, d_{in}(v)-d_{out} > 0\}$ and $V^- = \{v | v\in V, d_{in}(v) - d_{out} < 0\}$.
Secondly we solve the SDDNA problem on an un-weighted bi-directed de Bruijn graph deterministically 
in $\Theta(p(|V|+|E|) +(d_{max}\,p)^3)$ time. As a consequence we also present a $\Theta((|V|+|E|)\log(V))$ 
time single source shortest bi-directed path algorithm, which may be of independent interest to 
some assembly algorithms such as Velvet~\cite{velvet08} -- TourBus heuristic.

The organization of the paper is as follows. In Section~\ref{sec:prelim} we provide some preliminaries. 
Section~\ref{sec:prob-def} defines the CPP and SDDNA problems. In Section~\ref{sec:bi-short-path} we
introduce our algorithm for single source shortest bi-directed paths, which is used as a component
in our main algorithm. The main algorithm is introduced in Section~\ref{sec:det_algo} along with
algorithms for several sub-problems. Section~\ref{sec:no-cpp} briefly explains how we can handle situations when
the bi-directed graphs do not have cyclic CP tours. A greedy algorithm that runs in $\Theta(p(|V|+|E|)$ time
is described in Section~\ref{sec:gdy_algo}. Finally experimental studies are reported in 
Section~\ref{sec:exp}.

\section{Preliminaries}
\label{sec:prelim}
Let $s \in \Sigma^{n}$ be a string of length $n$. Any substring $s_j$ (i.e., $s[j,\ldots j+k-1], n-k+1\geq j\geq 1$) of 
length $k$ is called a $k-$mer of $s$. The set of all $k-$mer's of a given string $s$ is called the $k-$spectrum of 
$s$ and is denoted by $\mathbb{S}(s,k)$. Given a $k-$mer $s_j$, $\bar{s_j}$ denotes the {\em reverse compliment} of $s_j$ (e.g., if $s_j = AAGTA$ then $\bar{s_j} = TACTT$). Let $\leq$ be the partial ordering among the strings of equal length, 
then $s_i \leq s_j$ indicates that string $s_i$ is lexicographically smaller than $s_j$. Given any $k-$mer $s_i$, 
let $\hat{s_i}$ be the lexicographically smaller string between $s_i$ and $\bar{s_i}$. 
We call $\hat{s_i}$ the {\em canonical} $k-$mer of $s_i$. More formally, if $s_i \leq \bar{s_i}$ then $\hat{s_i} = s_i$ 
else $\hat{s_i} = \bar{s_i}$. A $k-$molecule of a given $k-$mer $s_i$ is a tuple $(s_i,\bar{s_i})$ 
consisting of $s_i$ and its reverse compliment $\bar{s_i}$, the first entry in this tuple is called the 
positive strand and the second entry is called the negative strand.

A {\em bi-directed} graph is a generalized version of a standard directed graph. In a directed graph every 
edge ($\text{--}\rhd$ or $\lhd\text{--}$) has only one arrow head. On the other hand, in a bi-directed graph 
every edge ($\lhd\text{--}\rhd$, $\lhd\text{--}\lhd$,$\rhd\text{--}\lhd$ or $\rhd\text{--}\rhd$) has two arrow 
heads attached to it. Formally, let $V$ be the set of vertices of a bi-directed graph, $E = \{(v_i,v_j,o_1,o_2) | v_i,v_j\in V \wedge o_1,o_2\in\{\lhd,\rhd\}\}$ is the set of bi-directed edges in a bi-directed graph $G(V,E)$. 
A {\em walk} $w(v_i,v_j)$ between two nodes $v_i,v_j \in V$ of a bi-directed graph $G(V,E)$ is a
sequence $v_i,e_{i_1},v_{i_1},e_{i_2},v_{i_2}\ldots v_{i_m},e_{i_{m+1}},v_j$, such that for 
every intermediate vertex $v_{i_l},1\leq l \leq m$, the orientation of the arrow heads on either side is opposite. To make this more clear let $e_{i_l},v_{i_l},e_{i_{l+1}}$ be the sub-sequence in the walk $w(v_i,v_j)$,
$e_{i_l}=(v_{i_{l-1}},v_{i_l},o_1,o_2), e_{i_{l+1}} = (v_{i_{l}},v_{i_{l+1}},o_1,o_2)$
then for the walk to be valid $e_{i_l}.o_2 = e_{i_{l+1}}.o_1$. If $v_j = v_i$ and $e_{i_1}.o_1 = 
e_{i_{m+1}}.o_2$ then the walk is called {\em cyclic}. A walk on the bi-directed graph is referred to as
a {\em bi-directed walk}. We define a orientation function ${\mathcal O} : V^2 \rightarrow \{\rhd, \lhd\}^2$ which
gives the orientation of the bi-directed edge between a pair of vertices -- if one exists . 
For instance if $(v_i,v_j,\lhd,\rhd)$ is a bi-directed edge between $v_i$ and $v_j$ then 
${\mathcal O}(v_i,v_j) = \lhd-\rhd$. An edge which is adjacent on a vertex with a orientation 
$\rhd$ ($\lhd$) is called an {\em incoming} ({\em outgoing}) edge. The incoming(outgoing) degree 
of a vertex $v$ is denoted by $d_{in}(v)$ ($d_{out}(v)$). A vertex $v$ is called {\em balanced}
iff $d_{in}(v)-d_{out}(v)=0$. A vertex is called {\em imbalanced} 
iff $|d_{in}(v)-d_{out}(v)| >0$. The imbalance of a vertex is called {\em positive} 
iff $d_{in}(v)-d_{out}(v) > 0$. Similarly a vertex is {\em negative} imbalanced iff 
$d_{in}(v)-d_{out}(v)<0$. A bi-directed graph is called {\em connected} iff every pair of
vertices have a bi-directed walk between them.

A de Bruijn graph $D^k(s)$ of the order $k$ on a given string $s$ is defined as follows. The vertex set 
$V$ of $D^k(s)$ is defined as the $k-$spectrum of $s$ (i.e., $V = \mathbb{S}(s,k)$). We use the notation 
$suf(v_i,l)$($pre(v_i,l)$) to denote the suffix(prefix) of length $l$ in string $v_i$. The symbol $.$ denotes 
concatenation between two strings. Finally the set of directed edges $E$ of $D^k(s)$ is defined as follows 
$E = \{(v_i,v_j) | suf(v_i,k-1)=pre(v_j,k-1) \wedge v_i[1].suf(v_i,k-1).v_j[k] \in \mathbb{S}(s,k+1) \}$. 
We can further generalize the definition of a de Bruijn graph $B^k(S)$ on a set $S=\{s_1,s_2\ldots s_n\}$ 
of strings, $V = \displaystyle\cup_{i=1}^{n} \mathbb{S}(s_i,k)$ and $E = \{(v_i,v_j) | suf(v_i,k-1) = pre(v_j,k-1) \wedge \exists \, l : v_i[1].suf(v_i,k-1).v_j[k]\in \mathbb{S}(s_l,k+1)\}$.

To model the double strandedness of the DNA molecules we should also consider the reverse 
compliments ($\bar{S} =\{\bar{s_1},\bar{s_2}\ldots \bar{s_n}\})$ while we build the de Bruijn graph. 
To address this a bi-directed de Bruijn graph $BD^k(S\cup \bar{S})$ has been suggested in 
~\cite{bidirected_graph}. The set of vertices $V$ of $BD^k(S\cup \bar{S})$ consists of all the 
possible $k-$molecules from $\Sigma^k$. For every $k+1-$mer $z \in S\cup \bar{S}$, if $x,y$ are 
the two $k-$mer's of $z$ then an edge is introduced between the $k-$molecules ($v_i,v_j$) corresponding 
to $x$ and $y$. The orientations of the arrow heads on the edges is chosen as follows. If both $x,y$ 
are the positive strands in $v_i,v_j$ an edge $(v_i,v_j,\rhd,\rhd)$ is introduced. If $x$ is a positive 
strand in $v_i$ and $y$ is a negative strand in $v_j$ an edge $(v_i,v_j,\rhd,\lhd)$ is introduced. 
Finally if $x$ is a negative strand in $v_i$ and $y$ is a positive strand in $v_j$ an 
edge $(v_i,v_j,\lhd,\rhd)$ is introduced. 

\section{Problem Definitions}
\label{sec:prob-def}
A {\em Chinese Postman walk} in a bi-directed graph is a bi-directed walk which visits every 
edge at least once. A {\em cyclic Chinese Postman walk} of minimum cost on a weighted bi-directed graph is denoted as CPW. The problem of finding a CPW is referred to as CPP. The problem of finding a CPW on an un-weighted 
bi-directed de Bruijn graph (of order $k$) constructed from a set of reads is called the
{\em Shortest Double stranded DNA} string (SDDNA) problem.  In this paper we give algorithms for the cyclic CPP and SDDNA problems.

\section{Single Source Shortest Path Algorithm on a Bi-directed de Bruijn Graph}
\label{sec:bi-short-path}
We first present an algorithm for the single source shortest path problem on a 
bi-directed de Bruijn graph. The bi-directed de Bruijn graph in the context of 
sequence assembly has non-negative weights on the edges. This makes it possible to 
extend the classic Dijkstra's single source shortest path algorithm to these
graphs. In our algorithm we attach two labels for each vertex in the bi-directed 
graph. Given a source vertex $s$, the algorithm initializes all the labels
similar to Dijkstra's algorithm. In each stage of the algorithm a label with 
the smallest cost is picked and some of labels corresponding to adjacent
nodes are updated. The only major difference between Dijkstra's algorithm 
and our algorithm is the way we update the labels. Dijkstra's algorithm 
updates all the labels/nodes which are adjacent to the smallest label/node
currently picked. However our algorithm updates only those labels/nodes which
are consistent with the bi-directed walk property.  

We now give details of our algorithm and prove its correctness. Let 
$G=(V,E)$ be the bi-directed graph of interest. Also let $s$ be the source 
and $t$ be the destination. We are interested in finding a {\em shortest 
bi-directed} walk from $s$ to $t$. We introduce two labels $dist^+[u]$, 
$dist^-[u]$ for every vertex $u \in V$. The algorithm first initializes
labels corresponding to the source (i.e. $dist^+[s]$ and $dist^-[s]$) 
to zero. Along with this labels of all the nodes adjacent to $s$ are
also initialized with the corresponding edge weight. The orientation
of the edge determines the label we use for initialization. For instance,
if $(s,v)$ is a bi-directed edge with $\rhd-\lhd$ as the orientation,
the label $dist^-[v]$ is initialized with $w_{s,v}$. On the 
other hand $dist^-[v]$ is left uninitialized. In contrast, if
the orientation of the edge is $\rhd-\rhd$ then $dist^[v]$ is initialized
to $w_{s,v}$ and $dist^-[v]$ is left uninitialized. All the uninitialized 
labels contains $\infty$ by default. 

In each iteration of the algorithm a label with the minimum cost is picked. 
Since we have two types of labels, the minimum label can come from either 
$dist^+$ or $dist^-$. In the first case let $u^+$ be the node corresponding
to the minimum label during the iteration. This intuitively means that we 
have a path from $s$ to $u^+$ and the orientation of the edge adjacent to 
$u^+$ in this path is either $\lhd-\rhd$ or $\rhd-\rhd$ -- we are going to 
prove this fact later in the correctness. On the other hand if $u^+$ is 
different from the destination $t$, then $u^+$ may possibly appear as an 
internal node in the shortest bi-directed walk between $s$ and $t$. In this 
case the path through $u^+$ should satisfy {\em bi-directed walk constraint}. 
Thus we should explore only those node(s) adjacent to $u^+$ with an edge(s) 
orientated as $\rhd-\lhd$ or $\rhd-\rhd$. The orientation 
of the edge determines the type of the label we need to update -- similar to 
the label initialization. For instance let $(u^+,v)$ be an edge adjacent on 
$u^+$ with a orientation of $\rhd-\lhd$. In this case we should use label 
$dist^-[v]$ to make an update. Similarly if the orientation of the same edge 
is $\rhd-\rhd$ then $dist^+[v]$ is used in the update process. Consistent
with the classical terminology of the Dijkstra's algorithm, we refer to the 
minimum cost label picked in each iteration as the {\em permanent label}.
For instance if a label $dist^-[v]$ is picked to be the minimum label in an
iteration then we call $dist[v]$ as the {\em permanent label} of node $v$.
Now to prove the correctness of the algorithm. It is sufficient to show that the
cost on the permanent label of a node in each iteration is the weight of the shortest
bi-directed path from $s$ to that node.

\restylealgo{ruled}\linesnumbered
\begin{algorithm}
\SetKwInOut{Input}{INPUT}
\SetKwInOut{Output}{OUTPUT} 
\Input{Bi-directed graph $G=(V,E)$ and two vertices $s,t \in V$}
\Output{Cost of the shortest bi-directed path between $s$ and $t$}
$ $\\
$dist^{+}[s] = dist^{-}[s] = 0$ \\
$dist^{+}[v] = dist^{-}[v] = \infty \,\,\forall v\in V \wedge v\neq s$ \\
$ $\\
\While{ $dist^{+} \neq \phi$ or $dist^{-} \neq \phi$}{
	$u^+ = \min_{u}\{dist^+\}$ \\
	$u^- = \min_{u}\{dist^-\}$ \\
$ $\\
	\If{$u^+ = t$ or $u^- = t$}{
		{\sf return} $\min\{dist^+[u^+], dist^-[u^-]\}$
	}
$ $\\
$ $ \\
	\If{$dist^+[u^+] < dist^-[u^-]$}{
		$U^+ = \{ v | (u^+,v)\in E \wedge ({\mathcal O})(u^+, v) = \lhd-\lhd)\}\}$ \\
		$U^- = \{ v | (u^+,v)\in E \wedge ({\mathcal O})(u^+, v) = \lhd-\rhd)\}\}$ \\
		$dist[u^+] = dist^+[u^+]$ \\
		$dist^+ = dist^+-\{u^+\}$ \\
	}\Else{
		$U^+ = \{ v | (u^-,v)\in E \wedge ({\mathcal O})(u^-, v) = \rhd-\lhd)\}\}$ \\
		$U^- = \{ v | (u^-,v)\in E \wedge ({\mathcal O})(u^+, v) = \rhd-\rhd)\}\}$ \\
		$dist[u^-] = dist^-[u^-]$ \\
		$dist^- = dist^--\{u^-\}$ \\
	}
$ $\\
$ $ \\
	\ForEach{$u \in dist^+$}{
		$dist^+[u] = \min\{ dist^+[u], dist^+[u^+]+w[u^+, u]\} $\\
	}
	\ForEach{$u \in dist^-$}{
		$dist^-[u] = \min\{ dist^-[u], dist^-[u^-]+w[u^-, u]\} $\\
	}
$ $\\
}
$ $ \\
{\sf return} $\infty$

\caption{Algorithm to find the shortest bi-directed path from $s$ to $t$ }
\label{algo:shortest_path}
\end{algorithm}

\begin{figure*}
\label{fig:bi-walk-example}
\begin{center}
\includegraphics[scale=0.7]{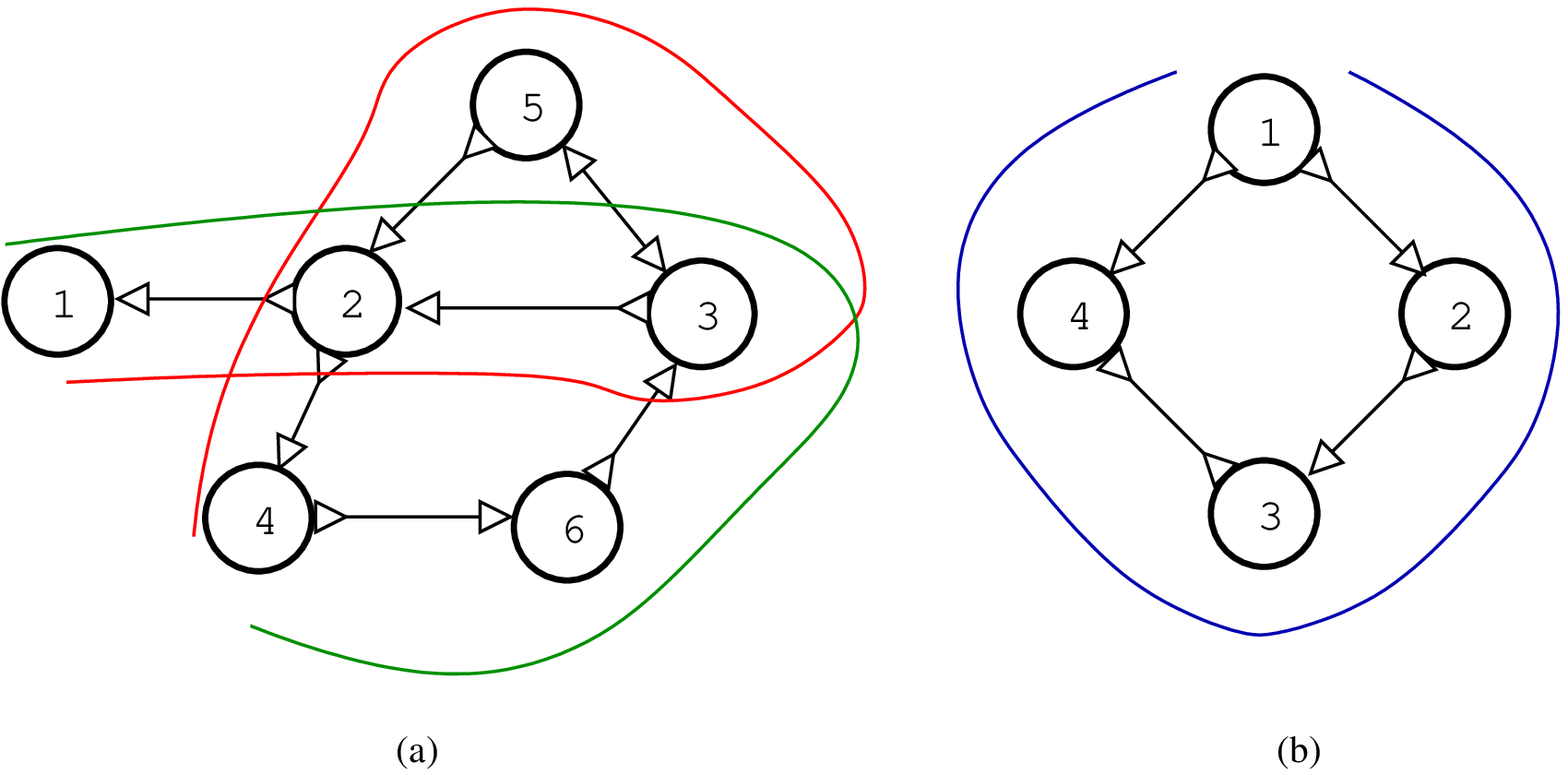}
\end{center}
\caption{{\bf(a)} node $4$ contains two bi-directed walks from node $1$, the green colored path is the shortest.{\bf (b)} the walk starting from node $1$ and ending at node $1$ is a Chinese walk but not a cyclic Chinese walk.}
\end{figure*}

\begin{theorem}
The permanent label of a node $u\in V$ in each iteration of {\bf Algorithm~\ref{algo:shortest_path}} is the weight of the
shortest bi-directed path from $s$ to $u$.
\end{theorem}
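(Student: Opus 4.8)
\noindent\emph{Sketch of the intended approach.}

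The plan is to lift the classical correctness proof of Dijkstra's algorithm to the two-label setting by induction on iterations, the key adjustment being that the relevant ``vertices'' are not the $v\in V$ but the \emph{states} $(v,\rhd)$ and $(v,\lhd)$, where the second coordinate records the orientation of the arrowhead incident on $v$ of the last edge used to reach $v$. Concretely, I would introduce an auxiliary directed graph $G'$ with vertex set $\{(v,\rhd),(v,\lhd):v\in V\}$ and, for each bi-directed edge $(u,v)$ of $G$, an arc of weight $w_{u,v}$ from $(u,\epsilon)$ to $(v,\epsilon')$ whenever the arrowhead of that edge at $u$ is \emph{opposite} to $\epsilon$ (this is exactly the bi-directed-walk constraint at the intermediate vertex $u$) and its arrowhead at $v$ equals $\epsilon'$. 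The first step is to check the dictionary: $dist^{+}[v]$ and $dist^{-}[v]$ are intended to be the shortest-walk costs to $(v,\rhd)$ and $(v,\lhd)$ respectively; a bi-directed walk from $s$ to $v$ corresponds, up to the free choice of the very first edge at $s$, to a directed walk of equal cost in $G'$; and the algorithm's initialization --- seeding \emph{both} labels of $s$ with $0$ --- is precisely what absorbs that initial freedom, after which its update rule is literally the relaxation step of Dijkstra on $G'$. Granting this, ``shortest bi-directed path from $s$ to $v$'' becomes ``shortest path in $G'$ from $\{(s,\rhd),(s,\lhd)\}$ to $\{(v,\rhd),(v,\lhd)\}$'', and the permanent label the algorithm first attaches to $v$ is the smaller of the two state-distances.

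With the reduction set up, the induction is the standard one. I would maintain the invariant: (i) every state already made permanent carries its true shortest-path cost in $G'$; and (ii) every not-yet-permanent state $q$ carries the cheapest cost of a $G'$-walk to $q$ whose penultimate state is permanent (and $+\infty$ if there is no such walk). The base case is the state $s$ at cost $0$. For the inductive step, let $q=(u,\epsilon)$ be the minimum-label state extracted in the current iteration, with value $\delta$. Invariant (ii) exhibits an actual $G'$-walk of cost $\delta$ ending at $q$, so the true distance to $q$ is at most $\delta$. For the reverse inequality, take any bi-directed walk $P$ from $s$ to $u$ whose last edge has arrowhead $\epsilon$ at $u$, regard it as a $G'$-walk ending at $q$, and let $r$ be its last state that is permanent, with successor $r'$ on $P$. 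Because all weights are non-negative, the cost of the prefix of $P$ up to $r'$ is at most the cost of $P$; and because labels only decrease, once $r$ was permanentized the label of $r'$ got updated to at most the label of $r$ plus the weight of the arc $r\to r'$, hence to at most that prefix cost. Since $r'$ is not yet permanent while $q$ was the extracted minimum, $\delta$ is at most the current label of $r'$, therefore at most the cost of $P$. Thus $\delta$ is the true distance to $q$, which proves (i) for $q$; that (ii) is re-established by the ensuing relaxation sweep out of $q$ is immediate from its definition.

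The step I expect to demand real care --- indeed the only place the argument departs from textbook Dijkstra --- is verifying that the orientation bookkeeping in the algorithm really does implement $G'$: one must run the four-way case analysis over edge orientations $\lhd\text{--}\lhd$, $\lhd\text{--}\rhd$, $\rhd\text{--}\lhd$, $\rhd\text{--}\rhd$ and confirm that, when $dist^{+}[u^{+}]$ is the label selected, the edges relaxed are exactly the edges out of $u^{+}$ carrying a $\lhd$ arrowhead at $u^{+}$ (the only legal continuations of a walk that reached $u^{+}$ via a $\rhd$), and that each such edge updates the label whose superscript matches the arrowhead it induces at the far endpoint --- and symmetrically for $dist^{-}[u^{-}]$. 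A secondary point, which the two-label device disposes of for free, is that the same vertex of $G$ may legitimately appear twice in a bi-directed walk, once with each incident orientation: these are simply two independent states of $G'$ and may be permanentized in either order, so no separate handling is needed. Once the orientation checks are pinned down, the exchange argument above goes through verbatim, and the theorem --- that the first permanent label of $v$ equals the cost of a shortest bi-directed path from $s$ to $v$ --- follows.
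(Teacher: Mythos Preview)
Your proposal is correct and takes a genuinely different route from the paper. The paper argues directly by induction on iterations, mimicking the textbook Dijkstra proof \emph{in situ}: it assumes the first $k$ permanent labels are correct, supposes for contradiction a strictly shorter bi-directed walk to the $(k{+}1)$st permanentized node, and traces that walk back to its first non-permanent vertex to contradict the greedy selection. By contrast, you factor the argument through an explicit state graph $G'$ on pairs $(v,\epsilon)$ and then invoke the standard Dijkstra invariant for ordinary directed graphs. What your approach buys is modularity and clarity: the bi-directed bookkeeping is isolated in a single, checkable reduction (the four-way orientation case analysis you flag), after which nothing specific to bi-directedness remains and the multi-source Dijkstra proof runs untouched; it also handles for free the phenomenon you note --- that a vertex may legitimately appear twice, once per orientation --- which the paper's direct argument leaves implicit. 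The paper's approach avoids introducing the auxiliary object $G'$ and is nominally shorter, but its contradiction step is sketched rather tersely (in particular, the role of the two label types at the ``first non-permanent'' vertex on the hypothetical shorter walk is not spelled out), so your version is arguably the more complete of the two.
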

\begin{proof}
We prove the statement by induction on the number ($n$) of iterations in 
Algorithm~\ref{algo:shortest_path}. We now prove the base case when $n=1$. Since
we have initialized $dist^+[s] = dist^-[s] = 0$ and the values of the remaining
both initialized and uninitialized nodes are $>0$; the first iteration picks $s$ and zero
is trivially the cost of shortest bi-directed path form $s$ to $s$. 

Assume that the statement is true for $n=1\dots k$. As per the induction hypothesis the permanent labels 
$dist[s], dist[v_{i_2}] \ldots dist[v_{i_k}]$ correspond to the costs of the shortest 
bi-directed paths between $s$ and $s,v_{i_2}\ldots v_{i_k}$. 

Now let $dist'[v_{i_{k+1}}] < dist[v_{i_{k+1}}]$
be the cost of the shortest bi-directed walk from $s$ to $v_{i_{k+1}}$. Also let 
$s,v_{j_2}\ldots v_{j_k},v_{i_{k+1}}$ be the path corresponding to the cost $dist'[v_{i_{k+1}}]$.
Note that $v_{j_k}$ cannot be one of the nodes with a permanent label. 
If not, we would have $dist'[v_{i_{k+1}}] = dist[v_{i_{k+1}}]$ 
(because we should have updated $v_{k+1}$ when the $v_{j_k}$ was given a permanent label) which is a contradiction. 
Now let $dist'[v_{j_k}]$ be the cost of the shortest path from $s$ to $v_{j_k}$. Clearly, 
$dist'[v_{j_k}] < dist[v_{i_{k+1}}]$ and this means that none of the nodes $v_{j_2}, v_{j_3} \ldots v_{j_k}$ haa a permanent label. Since in the iteration $n=1$ the algorithm updated the labels adjacent to
all the nodes this means that either $dist^+[v_{j_2}]$ or $dist^-[v_{j_2}]$ should have a cost $0<w_{s,j_2}$
and $dist'[v_{i_{k+1}}] \geq w_{s,j_2}$. In each iteration from $n = 1,\ldots, (k+1)$ we picked the globally 
minimum label $dist[v_{i_{k+1}}] < w_{s, j_2} \leq dist'[v_{i_{k+1}}]$ which is a contradiction. 
$\Box$
\end{proof}

We now give a simple example to illustrate the algorithm. Consider the bi-directed graph in 
Figure~\ref{fig:bi-walk-example}(a), with a unit weight on every edge. Let $s=1$ and $t=4$ for instance. 
From Figure~\ref{fig:bi-walk-example}(a) we see two bi-directed walks -- {\em red}, {\em green}.
The {\em green} path is the shortest path of length $4$ units. Now let us run our algorithm on this
graph. The algorithm starts with initializing from $s=1$ and from then on Table~\ref{tab:bi-example}
shows how the $dist$ labels are updated until iteration $7$ where we reach the target node $4$ and stop
the algorithm.

\section{Terminal Oriented Shortest Bi-directed Walks}
In the previous section we have seen how to find a shortest bi-directed walk between two nodes in 
a given bi-directed graph. We now define a {\em terminal oriented bi-directed walk} as follows. 
Let $w(v_i,v_j) = v_i,e_{i_1},v_{i_1},e_{i_2},v_{i_2}\ldots v_{i_m},e_{i_{m+1}},v_j$ be any 
bi-directed walk between two nodes $v_i$ and $v_j$ in a bi-directed graph. Then this bi-directed
walk $w(v_i,v_j)$ is called {\em terminal oriented bi-directed walk} iff $e_{i_1}.o_1 = \rhd$ and
$e_{i_{m_1}}.o_2 = \rhd$. For example in Figure~\ref{fig:bi-walk-example}(a) there are two bi-directed
walks between nodes $4$ and $1$ -- marked with green and red. However only the green bi-directed walk is terminally oriented. A terminal oriented bi-directed walk
$w$ is called the {\em shortest terminal oriented bi-directed walk} iff there is no other terminal
oriented bi-directed walk shorter than $w$. 

\subsection{An algorithm for finding a terminal oriented shortest bi-directed walk}
It is easy to modify Algorithm~\ref{algo:shortest_path} to find a terminal oriented shortest path
between $s$ and $t$. We only have to modify the initialization step and the step which checks
if the target node has been reached. During the initialization at line $2$ of Algorithm~\ref{algo:shortest_path}
we make $dist^+[s]=0$ and $dist^-[s]=\infty$. This avoids the exploration of bi-directed walks which does
not start with $\rhd$. In line $9$, we stop our exploration only if $u^+=t$. These changes
ensure that the bi-directed walk at $s$ starts with $\rhd$ and ends with $\rhd$ at $t$. 

\section{A Sufficient Condition for an Eulerian Tour on a Bi-directed Graph} 
The following Lemma~\ref{lem:etour}~\cite{bidirected_graph} is a sufficient condition for a cyclic 
Eulerian tour in a bi-directed graph. A bi-directed graph which has a cyclic Eulerian tour is called
an Eulerian bi-directed graph.
\begin{lemma}
\label{lem:etour}
A connected bi-directed graph is Eulerian if and only if every vertex is balanced.
\end{lemma}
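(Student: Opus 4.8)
The plan is to mimic the proof of Euler's theorem for ordinary directed multigraphs, carefully adapted to the bi-directed setting where the local consistency constraint on a walk replaces the usual in/out matching at a vertex. The necessity direction is the easy half: if the connected bi-directed graph $G$ has a cyclic Eulerian tour $w$, then traversing $w$ we pass through each vertex $v$ some number of times, and each passage enters $v$ along an incoming edge (orientation $\rhd$ at $v$) and leaves along an outgoing edge (orientation $\lhd$ at $v$), by the very definition of a bi-directed walk together with the cyclic closure condition $e_{i_1}.o_1 = e_{i_{m+1}}.o_2$. Since the tour uses every edge exactly once, the incoming edges at $v$ are consumed in bijection with the outgoing edges at $v$, so $d_{in}(v) = d_{out}(v)$; hence every vertex is balanced. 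I would phrase this as: each visit to $v$ pairs off one incoming with one outgoing edge-slot, and summing over visits exhausts all edges incident to $v$.

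For the sufficiency direction, assume $G$ is connected and every vertex is balanced. The argument is the standard ``maximal walk $\Rightarrow$ closed walk $\Rightarrow$ splice in detours'' induction, now on $|E|$. First I would show that any maximal bi-directed walk $w$ starting at a vertex $v_0$ with some initially chosen outgoing orientation must in fact be \emph{closed} and return consistently to $v_0$: whenever $w$ arrives at a vertex $u \neq v_0$ along an incoming edge, balancedness guarantees an as-yet-unused outgoing edge at $u$ whose near-orientation $\rhd$ matches — so the walk can be extended, contradicting maximality; the only way the walk can get stuck is back at $v_0$ having used up its edge budget there, and a counting argument at $v_0$ (incoming uses = outgoing uses) forces the closing-orientation condition, so $w$ is a cyclic bi-directed walk. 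Next, delete the edges of $w$ from $G$; in the remaining graph every vertex is still balanced (we removed equally many incoming and outgoing edges at each vertex). If no edges remain, $w$ is the desired Eulerian tour. Otherwise, by connectedness of $G$ there is a vertex $z$ lying on $w$ that is incident to some remaining edge with orientation $\rhd$ at $z$ on the remaining side; take a maximal walk in the remaining graph from $z$ — by the same argument it is a cyclic bi-directed walk $w'$ through $z$ — and splice $w'$ into $w$ at an occurrence of $z$. The splice is legal precisely because at $z$ the incoming/outgoing orientations line up the way the bi-directed walk definition requires; I would spell out the two cases (the occurrence of $z$ in $w$ either enters-then-leaves $z$, and $w'$ must be inserted with matching orientations). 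Iterating (equivalently, induction on the number of remaining edges) yields a single cyclic bi-directed walk covering all of $E$.

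The main obstacle — and the only place this differs essentially from the classical proof — is bookkeeping the orientation labels during the ``extend'' and ``splice'' steps: one must be sure that balancedness, which is a statement only about the \emph{counts} $d_{in}(v)$ and $d_{out}(v)$, actually delivers an available edge with the \emph{correct near-orientation} to continue or to insert a detour. The key observation that makes this work is that the bi-directed walk constraint only ever looks at the single arrowhead on the side of the vertex being passed through, so at each vertex ``incoming slots'' and ``outgoing slots'' behave exactly like in-edges and out-edges of an ordinary digraph, and $d_{in}(v) = d_{out}(v)$ is exactly the Eulerian condition for that local gadget; I would state this reduction explicitly as a short claim and then the rest of the argument is the textbook one. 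One should also note at the end that the resulting tour, being a cyclic bi-directed walk, automatically satisfies the closure condition $e_{i_1}.o_1 = e_{i_{m+1}}.o_2$, since the splicing preserves it and the base walk $w$ already had it.
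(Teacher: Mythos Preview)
The paper does not actually prove Lemma~\ref{lem:etour}: it is quoted as a known result from~\cite{bidirected_graph} and no argument is supplied. So there is no ``paper's own proof'' to compare your proposal against; you are filling in something the authors deliberately left to the literature.

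Your argument itself is the standard Hierholzer construction, and the reduction you identify --- that at any internal vertex a bi-directed walk consumes exactly one $\rhd$-slot and one $\lhd$-slot, so the local picture is indistinguishable from an ordinary directed graph with $d_{in}(v)=d_{out}(v)$ --- is the right observation and is what makes the classical proof go through verbatim. Two small points of care: first, in a bi-directed walk a passage through $v$ need not literally ``enter via an incoming edge and leave via an outgoing edge''; it may equally well arrive on a $\lhd$-slot and depart on a $\rhd$-slot. Your counting works in either case, but your prose fixes one of the two orientations and your phrase ``outgoing edge whose near-orientation $\rhd$'' conflates the two labels; tighten this. Second, in the splice step you should make explicit that the detour $w'$ begins and ends at $z$ with a $\rhd$/$\lhd$ pair \emph{opposite} to the pair used by $w$ at that occurrence of $z$ only if you pick the starting orientation of $w'$ accordingly; since both orientations are available at $z$ in the residual graph (balancedness again), this choice is always possible, but it deserves a sentence. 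With those edits your proof is complete and would be an acceptable substitute for the citation.
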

Note that if a bi-directed graph is Eulerian then a cyclic CP walk is the same as a cyclic Eulerian
walk. We emphasize the cyclic adjective for the following reason. Figure~\ref{fig:bi-walk-example}(b) has a 
CP walk starting and ending at vertex $1$. However the CP walk is not cyclic because the walk starts
with $\rhd$ and ends with $\lhd$. The bi-directed graph in Figure~\ref{fig:bi-walk-example}(b)
is not balanced. If the bi-directed graph is not Eulerian, the key strategy to find
a cyclic CP walk is to make it Eulerian by introducing multi-edges into the original graph. The hope is that 
introducing multi-edges would make the bi-directed graph balanced. Thus a cyclic Eulerian walk on a balanced 
multi-edge bi-directed graph would give a cyclic CP walk on the original graph. Since we are interested
in finding a shortest cyclic CP walk, we would like to minimize the number of multi-edges we introduce
in the original graph.

\begin{figure*}
\label{fig:multi-example}
\begin{center}
\includegraphics[scale=0.65]{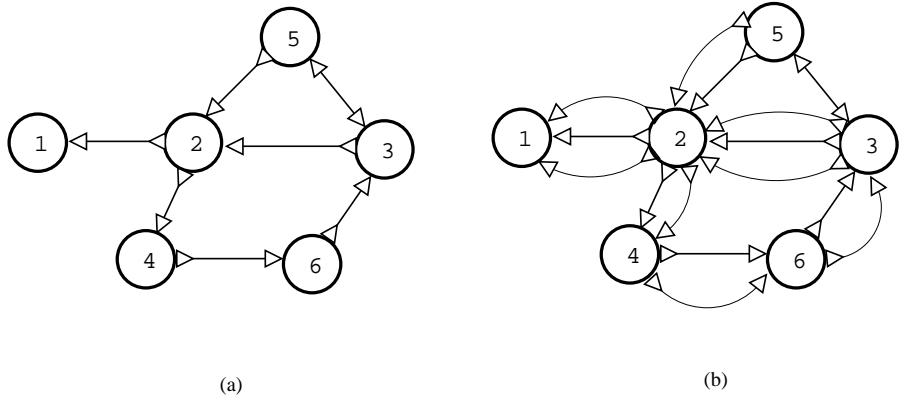}
\end{center}
\caption{{\bf (a)} a simple bi-directed graph, {\bf (b)} a multi-bi-directed graph. Notice that orientations
of the multi-edges is the same as the orientation of the original edge.}
\end{figure*}

\section{A Deterministic Algorithm to Find a Cyclic CP Walk on a Bi-directed Graph}
\label{sec:det_algo}
We now describe our deterministic algorithm to find a cyclic CP walk on a weighted bi-directed 
graph. First we define a {\em multi-bi-directed graph} as a bi-directed graph in which an edge between two
nodes is overlaid at least once, without changing its orientation. Figure~\ref{fig:multi-example}(a)
shows a bi-directed graph; Figure~\ref{fig:multi-example}(b) shows a valid multi-bi-directed graph. Notice
that while overlaying the edge we did not change its orientation. Since the orientation of the multi-edges
is same as the original edges, any bi-directed walk involving multi-edges is consistent with the bi-directed
walk in the original graph. Another important property of the multi-bi-directed graphs is their ability to make
the nodes balanced. Notice that the vertex $3$ in the original bi-directed graph is positively imbalanced --
$d_{in}(v_3) = 2, d_{out}(v_3) = 1$. However in the multi-bi-directed graph in Figure~\ref{fig:multi-example}(b)
we are able to balance vertex $3$ by introducing some multi-edges into the original graph. Given a bi-directed
graph $G=(V,E)$, let $G^m=(V,E^m)$ be some multi-bi-directed graph corresponding to $G$. The following 
Lemma~\ref{lem:multi-lemma} gives a characterization for $G$ to have a cyclic CP walk. 
\begin{lemma}
\label{lem:multi-lemma}
A non Eulerian bi-directed graph $G=(V,E)$ has a cyclic Chinese Postman walk $\iff \exists$ a corresponding 
multi-bi-directed graph $G^m=(V,E^m)$ which is Eulerian.
\end{lemma}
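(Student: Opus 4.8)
The plan is to prove both directions of the biconditional by exhibiting the natural correspondence between cyclic Chinese Postman walks in $G$ and cyclic Eulerian tours in multi-bi-directed graphs built on $G$.

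\textbf{The $(\Leftarrow)$ direction.} Suppose $G^m=(V,E^m)$ is a multi-bi-directed graph corresponding to $G$ that is Eulerian. By Lemma~\ref{lem:etour}, $G^m$ has a cyclic Eulerian tour $T$, i.e.\ a cyclic bi-directed walk that traverses every edge of $E^m$ exactly once. First I would observe that $G^m$ is connected and $E \subseteq E^m$ in the sense that every edge of $E$ appears at least once among the overlaid copies in $E^m$ (this is exactly what ``multi-bi-directed graph corresponding to $G$'' means, and the orientations are preserved by definition). Since every multi-edge in $E^m$ has the same orientation as its underlying edge in $E$, the walk $T$, when we forget which parallel copy we used, is a valid bi-directed walk in $G$: the orientation-matching condition $e_{i_l}.o_2 = e_{i_{l+1}}.o_1$ at every internal vertex is unaffected by the collapse, and the cyclic closure condition $e_{i_1}.o_1 = e_{i_{m+1}}.o_2$ is likewise preserved. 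Because $T$ covers every edge of $E^m$ and every edge of $E$ has a copy in $E^m$, the collapsed walk visits every edge of $E$ at least once. Hence it is a cyclic Chinese Postman walk on $G$, and since $G$ is non-Eulerian we are done with this direction.

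\textbf{The $(\Rightarrow)$ direction.} Suppose $G$ has a cyclic Chinese Postman walk $W = v_i,e_{i_1},v_{i_1},\ldots,e_{i_{m+1}},v_i$. The idea is to read off a multi-bi-directed graph directly from $W$: let $E^m$ be the multiset of edges obtained by taking, for each edge $e\in E$, as many copies as the number of times $W$ traverses $e$ (this count is at least one since $W$ is a CP walk, so every edge of $E$ is represented, and no new edges or orientations are introduced, so $G^m=(V,E^m)$ is a legitimate multi-bi-directed graph corresponding to $G$). Then $W$ itself, now reinterpreted so that each traversal of $e$ uses a distinct parallel copy, is a cyclic bi-directed walk in $G^m$ that uses every edge of $E^m$ exactly once — that is, a cyclic Eulerian tour. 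I would then note $G^m$ is connected because $W$ is a single walk touching every edge, hence (using that $G$ is connected, or directly that $W$ reaches every vertex incident to an edge) $G^m$ is connected; therefore $G^m$ is Eulerian. Alternatively, and perhaps more cleanly, one invokes Lemma~\ref{lem:etour} in the forward direction: the existence of the cyclic Eulerian tour $W$ in $G^m$ forces every vertex of $G^m$ to be balanced, but we do not even need this — existence of the tour is the conclusion we want.

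\textbf{Main obstacle.} The only delicate point is making the ``collapse'' and ``expand'' maps between walks in $G$ and walks in $G^m$ rigorous with respect to the bi-directed walk constraints, particularly at vertices where several parallel copies of edges meet and at the cyclic wrap-around; I would handle this by checking that the orientation data $(o_1,o_2)$ attached to each edge is literally copied onto each parallel edge (as the figure caption for Figure~\ref{fig:multi-example} emphasizes), so the local consistency condition $e.o_2 = e'.o_1$ and the cyclic condition are purely combinatorial facts about the orientation sequence and are invariant under both maps. A secondary subtlety is connectedness of $G^m$ in the $(\Rightarrow)$ direction: since $G$ may a priori have isolated vertices with no incident edges, one should either restrict attention to the subgraph induced by $E$ (isolated vertices are irrelevant to CP and Eulerian walks) or simply assume, as the surrounding development does, that $G$ is connected; then $G^m$ inherits connectedness because it has the same vertex set and a superset of the edges. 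With these checks in place the proof is essentially a definition-chase, and the substantive content is entirely carried by Lemma~\ref{lem:etour}.
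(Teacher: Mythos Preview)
Your proposal is correct and follows essentially the same approach as the paper: build $G^m$ by duplicating each edge according to its multiplicity in the CP walk so that the walk becomes a cyclic Euler tour, and conversely collapse a cyclic Euler tour of $G^m$ to a cyclic CP walk in $G$. The paper's proof is a two-line sketch of exactly this correspondence, and your added care about orientation preservation and connectedness only makes the argument more complete.
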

\begin{proof}
If $G$ has a cyclic Chinese Postman walk, introduce a unique multi-edge in $G^m$ for every repeated edge in the 
cyclic Chinese Postman walk. This makes the cyclic Chinese Postman walk on the original graph a cyclic Eulerain walk on the
multi-bi-directed graph $G^m$. Proving the other direction is very similar. $\Box$
\end{proof}

Given a multi-bi-directed graph $G^m(V,E^m)$ corresponding to some bi-directed graph $G=(V,E)$, we define the 
{\em multi-bi-directed graph weight} as ${\cal W}(G^m) = \displaystyle\sum_{e\in E^m} c(e)$, where $c : e\in E \rightarrow \mathbb{R^+}$
is a cost function on the bi-directed graph $G(V,E)$.  We denote $G^*(V,E^*)$ as the minimum weight Eulerian multi-bi-directed graph 
corresponding to $G(V,E)$ if at all one exists. The following Lemmas are easy to prove.
\begin{lemma}
\label{lem:reformulate}
Finding a cyclic CP walk on a bi-directed graph $G(V,E)$ is equivalent to finding a minimum weight Eulerian multi-bi-directed graph $G^*(V,E^*)$ corresponding to $G$. 
\end{lemma}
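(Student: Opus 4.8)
The plan is to establish the equivalence by chaining together the characterization already obtained in Lemma~\ref{lem:multi-lemma} with a cost-preserving correspondence between cyclic CP walks and Eulerian multi-bi-directed graphs. First I would observe that the two directions of the claimed equivalence correspond exactly to the two constructions sketched in the proof of Lemma~\ref{lem:multi-lemma}, and the only new content here is that these constructions are \emph{weight-optimal}, i.e. they send a minimum-cost object to a minimum-cost object. So the real task is to track costs through those constructions.

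For the forward direction, given a cyclic CP walk $W$ on $G$ of minimum total cost, I would form $G^m$ by adding, for each edge $e\in E$, exactly $(\mathrm{mult}_W(e)-1)$ parallel copies of $e$ (same orientation), where $\mathrm{mult}_W(e)$ is the number of times $W$ traverses $e$; since $W$ is a CP walk, $\mathrm{mult}_W(e)\ge 1$ for every $e$. By Lemma~\ref{lem:multi-lemma} (or directly, since $W$ viewed on $G^m$ uses each edge of $E^m$ exactly once and is cyclic) $G^m$ is Eulerian, and ${\cal W}(G^m)=\sum_{e\in E}\mathrm{mult}_W(e)\,c(e)=\mathrm{cost}(W)$. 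Conversely, from any Eulerian multi-bi-directed graph $G^m$ corresponding to $G$, a cyclic Eulerian tour (which exists by Lemma~\ref{lem:etour} applied to $G^m$, using connectivity inherited from $G$) projects down to a cyclic walk on $G$ that traverses every original edge at least once — hence a cyclic CP walk — of cost exactly ${\cal W}(G^m)$. These two cost-preserving maps are mutually inverse in the obvious sense on multiplicities, so the minimum over cyclic CP walks of $G$ equals the minimum over Eulerian multi-bi-directed graphs corresponding to $G$, and an optimal solution to one yields an optimal solution to the other; in particular $G^*$ as defined realizes a minimum-cost cyclic CP walk.

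The main obstacle — really the only subtle point — is arguing that an Eulerian tour on $G^m$ \emph{projects} to a valid bi-directed walk on $G$: one must check that forgetting which parallel copy of an edge was used preserves the bi-directed walk constraint (the arrow-head compatibility at each intermediate vertex). This is immediate because parallel copies carry the same orientation as the original edge, so the local condition $e_{i_l}.o_2 = e_{i_{l+1}}.o_1$ is unaffected by the projection; the cyclic condition $e_{i_1}.o_1 = e_{i_{m+1}}.o_2$ likewise transfers verbatim. I would also note in passing that connectivity of $G^m$ is never an issue: adding edges only helps, and $G^m$ has the same vertex set and a superset of the edges of $G$, so if $G$ is connected (as assumed throughout for CPP to make sense) then so is $G^m$, and Lemma~\ref{lem:etour} applies.
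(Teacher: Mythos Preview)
Your proposal is correct and is precisely the argument the paper has in mind: the paper omits the proof entirely, stating only that the lemma is ``easy to prove'' immediately after Lemma~\ref{lem:multi-lemma}, and your write-up is the natural fleshing-out of that omission---you take the two constructions from Lemma~\ref{lem:multi-lemma} and verify they are cost-preserving, which is exactly what is needed. The additional care you take with the projection argument (that parallel copies preserve the arrow-head compatibility and the cyclic condition) and with connectivity of $G^m$ is more than the paper bothers to spell out, but it is the right level of detail for a complete proof.
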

\begin{lemma}
\label{lem:cost}
If a bi-directed-graph $G(V,E)$ has a cyclic CP walk then the cost of that walk is equal to the weight of $G^*(V,E^*)$.
\end{lemma}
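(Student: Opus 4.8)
The plan is to prove the two inequalities $\mathrm{cost}(W) \le {\cal W}(G^*)$ and $\mathrm{cost}(W) \ge {\cal W}(G^*)$, where $W$ denotes a minimum-cost cyclic CP walk (a CPW) on $G$ — this is the walk ``that walk'' refers to — and then conclude equality. The entire argument is careful bookkeeping of edge multiplicities on top of the correspondence already packaged in Lemma~\ref{lem:multi-lemma}, so I would not introduce any new machinery.

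First I would establish $\mathrm{cost}(W) \ge {\cal W}(G^*)$. For each $e \in E$ let $t_e \ge 1$ be the number of times $W$ traverses $e$, so $\mathrm{cost}(W) = \sum_{e \in E} t_e\, c(e)$. Build a multi-bi-directed graph $G^m = (V,E^m)$ by overlaying, for each $e \in E$, exactly $t_e - 1$ extra copies of $e$ with the same orientation. Since the copies inherit the orientation of $e$, reading each repeated traversal of $e$ in $W$ as the traversal of a distinct copy turns $W$ into a bi-directed walk that uses every edge of $E^m$ exactly once, and it stays cyclic because $W$ is cyclic; hence $G^m$ is an Eulerian multi-bi-directed graph corresponding to $G$ with ${\cal W}(G^m) = \sum_e t_e\, c(e) = \mathrm{cost}(W)$. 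By the very definition of $G^*$ as a \emph{minimum} weight Eulerian multi-bi-directed graph corresponding to $G$, we get ${\cal W}(G^*) \le {\cal W}(G^m) = \mathrm{cost}(W)$.

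For the reverse inequality I would start from $G^*$. Since $G$ has a cyclic CP walk, $G$ (ignoring isolated vertices) is connected, and $G^*$ contains every vertex and at least one copy of every edge of $G$, so $G^*$ is connected as well; being Eulerian, it meets the hypothesis of Lemma~\ref{lem:etour} and therefore admits a cyclic Eulerian walk $W^*$. Collapsing each multi-edge of $G^*$ back to its underlying edge of $G$ (orientations are preserved by the collapse) turns $W^*$ into a cyclic bi-directed walk on $G$ that traverses every edge of $E$ at least once — i.e., a cyclic CP walk on $G$ — of cost exactly $\sum_{e \in E^*} c(e) = {\cal W}(G^*)$. Because $W$ has minimum cost among all cyclic CP walks on $G$, $\mathrm{cost}(W) \le {\cal W}(G^*)$. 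Combining the two inequalities yields $\mathrm{cost}(W) = {\cal W}(G^*)$; alternatively one can simply quote Lemma~\ref{lem:reformulate}, which is the argmin version of the same correspondence, and only track costs.

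The one genuinely delicate point — and the step I expect to need the most care — is checking that both transformations preserve the cyclic condition $e_{i_1}.o_1 = e_{i_{m+1}}.o_2$ together with every intermediate bi-directed-walk constraint. This is exactly where the stipulation in the definition of a multi-bi-directed graph that multi-edges do not change orientation is used: adding or deleting parallel copies with identical endpoints and identical arrow-head orientations alters neither the validity of any intermediate step nor the head orientations at the two ends of the walk, so ``cyclic CP walk on $G$'' and ``cyclic Eulerian walk on an Eulerian multi-bi-directed graph corresponding to $G$'' are interchangeable with cost/weight preserved on the nose. A minor secondary point worth a sentence is that $G^*$ is well defined at all: the walk-to-$G^m$ construction above exhibits at least one Eulerian multi-bi-directed graph of finite weight, and since $c$ takes values in $\mathbb{R}^+$, only finitely many multi-bi-directed graphs have weight at most ${\cal W}(G^m)$, so the minimum defining $G^*$ is attained.
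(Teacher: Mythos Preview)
Your argument is correct and is exactly the natural expansion of what the paper has in mind: the paper does not actually write out a proof of this lemma but simply declares it (together with Lemma~\ref{lem:reformulate}) ``easy to prove,'' relying implicitly on the walk\,$\leftrightarrow$\,multi-graph correspondence of Lemma~\ref{lem:multi-lemma}. Your two-inequality bookkeeping via the multiplicities $t_e$ is precisely that correspondence made explicit, so there is no methodological difference to discuss; you have supplied the details the paper omits, including the minor but real points about connectivity of $G^*$ and the attainment of the minimum defining $G^*$.
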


\subsection{Balancing bi-partite graph}
\label{sec:bal-bi-graph}
Given a bi-directed de Bruijn graph $G(V,E)$ we define a corresponding {\em Balancing Bi-partite Graph},
$B(P,Q,E^b)$ as follows. Let $V^+=\{ v |\,d_{in}(v) - d_{out}(v) > 0\}$, 
$V^- = \{ v |\,d_{in}(v) - d_{out}(v) < 0\}$. 
$P = \cup_{p\in V^+} \{p^{(1)},p^{(2)}\ldots p^{(|d_{in}(p)-d_{out}(p)|)}\}$, 
$Q = \cup_{q\in V^-}\{q^{(1)}, q^{(2)} \ldots q^{(|d_{in}(q)-d_{out}(q)|)}\}$. We now introduce an 
edge between $p^(i) \in P$ and $q^(j) \in Q$ iff $p,q \in V$ are connected by a {\em terminal oriented
bi-directed walk} from $p$ to $q$. Let $dist^t(p,q)$ be the weight of
this walk. Then $E^b = \{(p^{(i)}, q^{(j)}) |\, dist^t(p,q) \neq \infty \,\,\wedge\,\, p,q \in V\}$. The weight
of the edge $(p^{(i)},q^{(j)})\in E^b$ is the weight of terminal oriented bi-directed walk $dist^t(p,q)$.

\begin{lemma}
\label{lem:bi-match}
A non Eulerain bi-directed graph $G(V,E)$ has a cyclic CP walk $\iff$ the balancing bi-partite graph 
$B(P,Q,E^b)$ has a perfect match.
\end{lemma}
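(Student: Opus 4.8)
The plan is to prove both directions by connecting perfect matchings in $B(P,Q,E^b)$ to Eulerian multi-bi-directed graphs, and then invoking Lemma~\ref{lem:multi-lemma} (or equivalently Lemma~\ref{lem:reformulate}). So the real content is: \emph{$G$ has a corresponding Eulerian multi-bi-directed graph $G^m \iff B(P,Q,E^b)$ has a perfect matching}. Note first that a perfect matching can only exist when $|P| = |Q|$, i.e. $\sum_{v \in V^+}(d_{in}(v)-d_{out}(v)) = \sum_{v\in V^-}(d_{out}(v)-d_{in}(v))$; this identity always holds because each bi-directed edge contributes one incoming and one outgoing endpoint-slot across its two endpoints (this is the bi-directed analogue of ``sum of out-degrees equals sum of in-degrees''), so the total imbalance is zero. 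Hence $|P|=|Q|=p$ automatically, and the only obstruction to a perfect matching is a structural one (Hall-type).

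For the forward direction ($\Rightarrow$), suppose $G$ is non-Eulerian but has a cyclic CP walk. By Lemma~\ref{lem:multi-lemma} there is an Eulerian multi-bi-directed graph $G^m=(V,E^m)$ corresponding to $G$. The set of multi-edges $E^m \setminus E$ (counted with multiplicity) forms a collection of extra parallel copies whose addition balances every vertex. I would argue that this multi-edge set decomposes into a union of \emph{terminal oriented bi-directed walks}, each running from some vertex of $V^+$ to some vertex of $V^-$: start at any still-unbalanced positive vertex $p$ (which, in $G^m$, has an excess outgoing multi-edge slot at orientation $\rhd$ since balancing $G$ means adding edges that increase $d_{out}$ at positively-imbalanced vertices), follow multi-edges respecting the bi-directed walk constraint — each intermediate vertex is entered and left through opposite orientations, which is possible because multi-edges were added in balanced pairs at intermediate (originally balanced, or already-accounted) vertices — and terminate when we reach a negatively-imbalanced vertex $q$ that needs an extra incoming ($\rhd$) edge. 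Peeling off one such walk reduces the imbalance of one $p\in V^+$ and one $q\in V^-$ by exactly one each; iterating, we obtain $p$ edge-disjoint (in the multi-edge multiset) terminal oriented walks, one per unit of imbalance. Since such a walk from $p$ to $q$ exists, $dist^t(p,q)\neq\infty$, so the edge $(p^{(i)},q^{(j)})\in E^b$; assigning to each walk the corresponding copy-pair $(p^{(i)},q^{(j)})$ gives a perfect matching in $B(P,Q,E^b)$.

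For the reverse direction ($\Leftarrow$), given a perfect matching $M$ in $B$, each matched edge $(p^{(i)},q^{(j)})$ certifies a terminal oriented bi-directed walk from $p$ to $q$ of finite weight; I take the edges of that walk and overlay them (with multiplicity) onto $G$, doing this for all $p$ matched edges of $M$. The resulting multigraph $G^m$ is still a valid multi-bi-directed graph (overlaying never changes orientations, and a walk's edges are legitimate edges of $G$). It remains to check $G^m$ is balanced at every vertex: a terminal oriented walk from $p$ to $q$ contributes, at each internal vertex, a matched incoming/outgoing pair (net imbalance $0$), at $p$ a net $+1$ to $d_{out}$ (the walk leaves $p$ with a $\rhd$-edge, i.e. an outgoing edge), and at $q$ a net $+1$ to $d_{in}$ (the walk arrives with a $\rhd$-edge into $q$). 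Summing over all $p$ walks of $M$: since $M$ is perfect, vertex $p\in V^+$ appears as an endpoint in exactly $d_{in}(p)-d_{out}(p)$ walks, adding exactly that much to $d_{out}(p)$ and restoring balance; symmetrically for $q\in V^-$. A vertex not in $V^+\cup V^-$ only ever appears internally, so stays balanced. Hence $G^m$ is a connected balanced multi-bi-directed graph, Eulerian by Lemma~\ref{lem:etour}, and by Lemma~\ref{lem:multi-lemma} $G$ has a cyclic CP walk.

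The main obstacle I anticipate is making the forward-direction decomposition fully rigorous: showing that the multi-edge multiset of \emph{any} Eulerian $G^m$ can be carved into terminal oriented $V^+ \to V^-$ walks without getting ``stuck'' at an intermediate vertex where the orientation parity fails. The clean way to handle this is to not argue about an arbitrary $G^m$ at all, but rather to note that it suffices to produce \emph{some} perfect matching; so I would instead extract the walks greedily and show the bi-directed walk constraint is maintained by a parity/counting invariant on the remaining multi-edge slots at each vertex (at every partially-processed vertex the number of unused incoming multi-slots equals the number of unused outgoing multi-slots, except at the current walk's two live endpoints), which guarantees the walk can always be extended until it hits a $V^-$ vertex. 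The $(d_{max}p)^3$ and cost-optimality aspects are not part of this lemma — Lemma~\ref{lem:bi-match} is purely an existence statement, and cost-minimality is deferred to the min-weight-matching argument that presumably follows.
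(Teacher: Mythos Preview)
Your proposal is correct and, in essence, matches the paper's proof. The reverse direction ($\Leftarrow$) is identical to the paper's: overlay the terminal oriented walks coming from a perfect match $M_b$ onto $G$, observe this balances every vertex, and invoke Lemma~\ref{lem:multi-lemma} (via Lemma~\ref{lem:etour}).

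For the forward direction ($\Rightarrow$) there is a small difference in packaging. The paper does not pass through $G^m$ and decompose $E^m\setminus E$; instead it argues directly from the cyclic CP walk, observing that each unbalanced vertex $v$ must be visited at least $|d_{in}(v)-d_{out}(v)|$ times and that suitable occurrences of a $p\in V^+$ and a $q\in V^-$ in the walk are joined by a terminal oriented sub-walk, yielding the matching. Since the repeated edges of the CP walk are exactly the extra multi-edges of $G^m$, the two viewpoints are equivalent. Your version is in fact more careful: the paper asserts the pairing without justifying why the sub-walks are terminal oriented or why the pairing can be made one-to-one, whereas your parity-invariant peeling argument actually addresses that gap.
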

\begin{proof}
(Forward direction:) Since $G$ has a cyclic CP walk, every un-balanced node $v \in V$(positive or negative) 
should appear at least $i \geq |d_{in}(v)-d_{out}(v)|$ times. Label each occurrence of $v$ in the cyclic CP walk by $v^{(i)}$. Also note that $\sum_{p\in V^+} |d_{in}(p)-d_{out}(p)| = \sum_{q\in V^-} |d_{in}(q) - d_{out}(q)|$, 
since $G$ has a cyclic CP walk. Now we can pair every $i^{th}$ occurrence of a positively imbalanced node $p$ to some
$j^{th}$ occurrence of a negatively imbalanced node $q$ since $p^{(i)}$ and $q^{(j)}$ are connected by a terminal oriented bi-directed
walk in the cyclic CP walk. Every such pairing corresponds to a matched edge in $B(P,Q,E^b)$.

\vspace{0.1in}

\noindent(Reverse direction:) Consider the perfect match $M_b$ in $B(P,Q,E^b)$. For every edge $(p^{(i)}, q^{(j)}) \in M_b$
over-lay the underlying oriented bi-directed walk between $p,q \in V$ on $G(V,E)$. This makes $G(V,E)$ a balanced
multi-bi-directed graph. Then by Lemma~\ref{lem:multi-lemma} we can construct a cyclic CP walk in $G$. $\Box$
\end{proof}

\subsection{Constructing a family of Eulerian multi-bi-directed graphs}
We now give a construction for generating Eulerian multi-bi-directed graphs corresponding to a given non 
Eulerian bi-directed graph which has a cyclic CP walk. We call this a {\em Balancing Match Family} denoted
by $\mathcal{F}$. Lemma~\ref{lem:bi-match} can be used to generate $\mathcal{F}$. Assume that $G(V,E)$ is a non Eulerian bi-directed graph that has a cyclic CP walk. The following construction generates a
family of Eulerian multi-bi-directed graphs corresponding to $G(V,E)$.
\begin{itemize}
\item {\sf STEP-1:} Create a balancing bi-partite graph $B(P,Q,E^m)$ corresponding to $G(V,E)$ by choosing 
some terminal oriented bi-directed walk between $p^{(i)}\in P$ and $q^{(j)}\in Q$.
\item {\sf STEP-2:} Find a perfect match $M_b$ in $B(P,Q,E^m)$. For each edge in $M_b$ overlay the 
corresponding terminal oriented bi-directed walk on $G(V,E)$. This generates a Eulerian multi-bi-directed
graph $G^m(V,E^m)$.
\end{itemize}

The following Lemma~\ref{lem:family} is easy to see.
\begin{lemma}
\label{lem:family}
If $G(V,E)$ is a non Eulerian bi-directed graph that has a cyclic CP walk, then every corresponding
Eulerian multi-bi-directed graph $G^m(V,E^m)$ belongs to the family $\mathcal{F}$.
\end{lemma}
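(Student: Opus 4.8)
The plan is to prove Lemma~\ref{lem:family} directly from the definitions and the construction given just above it, essentially by observing that the construction is exhaustive: every Eulerian multi-bi-directed graph $G^m(V,E^m)$ associated with $G$ arises from some choice in {\sf STEP-1} together with some perfect match in {\sf STEP-2}. So the main task is to show that an arbitrary such $G^m$ can be ``reverse-engineered'' into a perfect match in an appropriately chosen balancing bi-partite graph.

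First I would fix an arbitrary Eulerian multi-bi-directed graph $G^m(V,E^m)$ corresponding to $G(V,E)$; by definition $E^m$ is obtained from $E$ by overlaying a collection of multi-edges, each keeping its original orientation, and the result is balanced (hence Eulerian by Lemma~\ref{lem:etour}). The key step is to decompose the set of added multi-edges into terminal oriented bi-directed walks, one per imbalanced-node pair. Since $G$ has a cyclic CP walk, Lemma~\ref{lem:bi-match} and the argument in its forward direction tell us that the excess at each $v\in V^+$ is exactly $|d_{in}(v)-d_{out}(v)|$ and that $\sum_{p\in V^+}|d_{in}(p)-d_{out}(p)| = \sum_{q\in V^-}|d_{in}(q)-d_{out}(q)|$. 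In $G^m$ the added multi-edges form a sub(multi)graph in which every node of $V^+$ has a net positive multi-edge imbalance, every node of $V^-$ a net negative one, and every balanced node of $G$ is again balanced; a standard Eulerian-type decomposition then writes this sub(multi)graph as a disjoint union of terminal oriented bi-directed walks each running from some occurrence $p^{(i)}$ of a positively imbalanced node to some occurrence $q^{(j)}$ of a negatively imbalanced node (plus possibly some closed walks, which can be discarded or absorbed). Choosing these walks as the edges of $B(P,Q,E^m)$ in {\sf STEP-1}, the walks selected by this decomposition constitute a perfect match $M_b$ in {\sf STEP-2}, and overlaying them reproduces exactly $G^m$ up to the discarded closed walks. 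Hence $G^m$ (or its ``core'') is a member of $\mathcal F$.

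I expect the main obstacle to be the decomposition step: making precise that the multi-edge excess of $G^m$ really does split into terminal oriented walks whose endpoints are imbalanced nodes with the correct multiplicities, and handling the orientation bookkeeping (the condition $e_{i_1}.o_1 = \rhd$ and $e_{i_{m+1}}.o_2 = \rhd$) so that each extracted walk is genuinely terminal oriented and therefore a legal edge of the balancing bi-partite graph. One has to be careful that bi-directed walks can revisit vertices and that the parity/orientation constraints at internal nodes are automatically satisfied because $G^m$ is balanced; I would lean on the same local orientation-matching reasoning used in the proof of Lemma~\ref{lem:bi-match}. The closed-walk residue is a minor nuisance rather than a real difficulty: since $\mathcal F$ is defined as the family of \emph{all} Eulerian multi-bi-directed graphs generated by the construction, and adding a balanced closed walk to a member of $\mathcal F$ keeps it Eulerian, one can either enlarge $E^m$ in {\sf STEP-1} to include edges of weight realizing those closed walks or simply note the statement only claims membership ``belongs to $\mathcal F$'' in the sense of being generable, so absorbing the residue into one of the chosen walks suffices. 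I would finish by remarking that this is precisely why Lemma~\ref{lem:family} is ``easy to see'': it is the converse bookkeeping to the construction, and no new idea beyond Lemmas~\ref{lem:multi-lemma} and~\ref{lem:bi-match} is needed.
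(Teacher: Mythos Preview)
Your proposal is far more detailed than the paper's own treatment: the paper gives no proof at all, simply declaring the lemma ``easy to see.'' Your reverse-engineering argument---decomposing the added multi-edges of an arbitrary Eulerian $G^m$ into terminal oriented bi-directed walks between copies of imbalanced nodes, then reading off a perfect match in a suitably chosen balancing bi-partite graph---is the natural way to substantiate the claim and is essentially correct.

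The one genuine subtlety you flag, the possible closed-walk residue, deserves a slightly sharper statement than you give it. Your suggestion to absorb any leftover balanced cycle into one of the terminal oriented walks works because in {\sf STEP-1} the construction allows an \emph{arbitrary} terminal oriented bi-directed walk (not a shortest one, not a simple path), and since $G$ is assumed connected in the bi-directed sense one can always splice a detour through the cycle into some chosen $p$--$q$ walk while preserving the bi-directed walk property at the splice points. That is the only place where a little orientation bookkeeping is required; once it is done, membership in $\mathcal{F}$ follows exactly as you outline. So your proof is sound and in fact supplies the argument the paper omits.
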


The following Lemma gives an expression for the weight of
any $G^m(V,E^m) \in \mathcal{F}$.
\begin{lemma}
\label{lem:weight}
Let $G(V,E,c)$ be a non Eulerian weighted bi-directed graph which has a cyclic CP walk $c: E\rightarrow \mathbb{R^+}$.
Let $G^m(V,E^m,c) \in \mathcal{F} $ be some Eulerian multi-bi-directed graph. Then, 
${\mathcal W}(G^m) = \displaystyle\sum_{e\in E} c(e) + \displaystyle\sum_{(p^{(i)},q^{(j)})\in M_b}dist^t(p,q)$, where 
$M_b$ is a perfect match in $B(P,Q,E^b)$.
\end{lemma}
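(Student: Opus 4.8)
The plan is to unwind the definitions introduced in this section and observe that the weight of a multi-bi-directed graph decomposes into the contribution of the original edges plus the contribution of the overlaid multi-edges. Recall that $\mathcal{W}(G^m) = \sum_{e \in E^m} c(e)$ by definition, and that $E^m$ arises from $E$ by the construction in {\sf STEP-1} and {\sf STEP-2}: we take every original edge of $E$ once, and then for each matched pair $(p^{(i)}, q^{(j)}) \in M_b$ we overlay the terminal oriented bi-directed walk realizing that edge of $B(P,Q,E^b)$.

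First I would write $E^m = E \uplus M$, where $M$ is the multiset of all edges added by the overlay step, and split the sum accordingly: $\mathcal{W}(G^m) = \sum_{e \in E} c(e) + \sum_{e \in M} c(e)$. Next I would argue that $M$ is precisely the disjoint union, over matched pairs $(p^{(i)}, q^{(j)}) \in M_b$, of the edge-multisets of the chosen terminal oriented bi-directed walks between $p$ and $q$; this is exactly what {\sf STEP-2} prescribes, and since $M_b$ is a perfect match each positively imbalanced occurrence and each negatively imbalanced occurrence is used exactly once, so there is no double counting across pairs. Then by the definition of the edge weight in the balancing bi-partite graph $B(P,Q,E^b)$ — namely that the weight of $(p^{(i)}, q^{(j)})$ equals $dist^t(p,q)$, the weight of the terminal oriented bi-directed walk — the inner sum over the walk for a single matched pair is exactly $dist^t(p,q)$. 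Summing over all pairs in $M_b$ gives $\sum_{(p^{(i)}, q^{(j)}) \in M_b} dist^t(p,q)$, which yields the claimed formula.

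I do not expect a serious obstacle here; the statement is essentially bookkeeping once the construction $\mathcal{F}$ and the weight conventions are in place. The one point that needs a sentence of care is the claim that the overlaid walks, taken over distinct matched pairs, contribute disjointly to $E^m$ as a multiset — i.e., that overlaying a walk that reuses an original edge genuinely adds a fresh copy rather than being absorbed, which is guaranteed because a multi-bi-directed graph permits an edge to be overlaid arbitrarily many times without changing orientation. With that observed, the decomposition of the sum is immediate and the lemma follows.
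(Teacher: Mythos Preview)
Your proposal is correct and follows essentially the same approach as the paper: decompose $E^m$ into the original edges $E$ plus the multiset of overlaid edges coming from the matched walks, and identify the second sum with $\sum_{(p^{(i)},q^{(j)})\in M_b} dist^t(p,q)$ via the edge-weight convention on $B(P,Q,E^b)$. The paper's own proof is a terse two-sentence version of exactly this bookkeeping; your write-up simply makes explicit the multiset decomposition and the fact that overlaying adds fresh copies rather than being absorbed, which the paper leaves implicit.
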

\begin{proof}
Since $G^m$ is Eulerian it should cover every edge in $G$ - this corresponds to the first term. Secondly,
since $G^m(V,E^m,c) \in \mathcal{F}$ the cost of multi-edges coming from overlaying the terminal bi-directed
walk corresponds to the match $M_b$ in $B(P,Q,E^b)$. This corresponds to the second term. $\Box$
\end{proof}

\subsection{An algorithm for finding an optimal cyclic CP walk}
We now put together all the results in the preceding sub-section(s) to give an algorithm to find $G^*(V,E^*)$.
The algorithm is summarized in the following steps.

\begin{itemize}
\item {\sf STEP-1:} We first identify positive and negative imbalanced nodes in $G$. 
Let $V^+ = \{v | d_{in}(v) - d_{out}(v) > 0\}$, $V^- =\{v | d_{in}(v) - d_{out}(v) < 0\}$
\item {\sf STEP-2:} Find the cost of a {\em terminal oriented shortest bi-directed walk} between every pair 
$(v,u) \in V^+\times V^-$. Let this cost be denoted as $dist^t(v,u)$.
\item {\sf STEP-3:} Create a {\em balancing bi-partite graph} $B(P,Q,E^b)$ as follows. Let $P = \displaystyle\cup_{v\in V^+} \{v^{(1)},v^{(2)},\ldots, v^{(|d_in(v)-d_out(v)|)}\}$, $Q = \displaystyle\cup_{u\in V^-} \{u^{(1)},u^{},\ldots, u^{(|d_{in}(u)-d_{out}(u)|)}\}$, $E = \{(v^{(i)},u^{(j)}) | v^{(i)} \in P \wedge u^{(j)} \in Q\}$. The cost of an edge $c(v^{(i)},u^{(j)}) = dist^t(v,u)$. 
\item {\sf STEP-4:} Find a minimum cost perfect match in $B$. Let this match be $M_b$. If $B$ does not have a
perfect match then $G$ does not have a cyclic CP walk.
\item {\sf STEP-5:} For each edge $(v^{(i)}, u^{(j)}) \in M_b$ , overlay the terminal oriented shortest bi-directed
walk between $v$ and $u$ in the $G(V,E)$. After overlaying all the terminal oriented bi-directed walks from $M_b$ on to 
$G(V,E)$ we obtain $G^*(V,E^*)$. We will prove that it is optimal in Theorem~\ref{thm:thm1}.
\end{itemize}

\begin{theorem}
\label{thm:thm1}
If $G(V,E)$ is a bi-directed graph that has a cyclic CP walk, then the cost of this cyclic CP walk is equal to 
${\cal W}(G^*) = \displaystyle\sum_{e\in E}c(e) + \displaystyle\sum_{(v^{(i)},u^{(j)})\in M_b}dist^t(v,u)$. Here
$M_b$ is the min-cost perfect match in the balancing bi-partite graph $B$.
\end{theorem}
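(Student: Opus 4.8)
The plan is to reduce Theorem~\ref{thm:thm1} to the structural results already established, namely Lemmas~\ref{lem:reformulate}, \ref{lem:bi-match}, \ref{lem:family} and~\ref{lem:weight}, and to a standard optimality argument for minimum-cost perfect matching. By Lemma~\ref{lem:reformulate}, finding a cyclic CP walk on $G(V,E)$ is equivalent to finding the minimum weight Eulerian multi-bi-directed graph $G^*(V,E^*)$, and by Lemma~\ref{lem:cost} the cost of the cyclic CP walk equals ${\cal W}(G^*)$. So it suffices to show that the graph produced by STEP-1 through STEP-5 attains the minimum of ${\cal W}(\cdot)$ over all Eulerian multi-bi-directed graphs corresponding to $G$, and that this minimum equals the claimed expression.

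First I would argue that the STEP-5 output is feasible: STEP-3 builds exactly the balancing bi-partite graph $B(P,Q,E^b)$ of Section~\ref{sec:bal-bi-graph} (using the terminal oriented shortest bi-directed walks from STEP-2 as edge weights), so a perfect match in $B$ exists iff $G$ has a cyclic CP walk by Lemma~\ref{lem:bi-match}; since we assumed a cyclic CP walk exists, STEP-4 succeeds, and overlaying the walks of $M_b$ yields a balanced — hence, by Lemma~\ref{lem:etour}, Eulerian — multi-bi-directed graph, which therefore lies in $\mathcal{F}$. Next, by Lemma~\ref{lem:weight}, the weight of this output is precisely $\sum_{e\in E}c(e) + \sum_{(v^{(i)},u^{(j)})\in M_b} dist^t(v,u)$, which matches the formula in the statement. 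It remains to show optimality: take an arbitrary Eulerian multi-bi-directed graph $G^m(V,E^m)$ corresponding to $G$. By Lemma~\ref{lem:family} it belongs to $\mathcal{F}$, so it arises (in the sense of the STEP-1/STEP-2 construction of $\mathcal{F}$) from some perfect match $M_b'$ in a balancing bi-partite graph $B(P,Q,E^{m'})$ whose edge for $(p^{(i)},q^{(j)})$ carries the weight of whatever terminal oriented bi-directed walk was chosen, and by Lemma~\ref{lem:weight} its weight is $\sum_{e\in E}c(e) + \sum_{(p^{(i)},q^{(j)})\in M_b'} (\text{weight of chosen walk})$. Since each chosen walk has weight at least $dist^t(p,q)$ (the \emph{shortest} terminal oriented bi-directed walk), and since $B$ in STEP-3 uses exactly these shortest weights and $M_b$ is a \emph{minimum}-cost perfect match in $B$, we get $\sum_{(p^{(i)},q^{(j)})\in M_b'}(\text{weight of chosen walk}) \ge \sum_{(p^{(i)},q^{(j)})\in M_b'} dist^t(p,q) \ge \sum_{(v^{(i)},u^{(j)})\in M_b} dist^t(v,u)$, so ${\cal W}(G^m) \ge {\cal W}(G^*)$. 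Combining feasibility, the weight formula, and this lower bound gives the theorem.

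The main obstacle I anticipate is making the correspondence between an arbitrary Eulerian multi-bi-directed graph and a perfect match on the \emph{shortest}-walk bi-partite graph fully rigorous: Lemma~\ref{lem:family} only tells us $G^m \in \mathcal{F}$, i.e., $G^m$ is obtained by overlaying \emph{some} terminal oriented walks along \emph{some} perfect match, but one must verify that replacing each such walk by the corresponding shortest terminal oriented walk (a) still yields a valid Eulerian multi-bi-directed graph with the same imbalance-correcting structure, and (b) does not increase the weight — which is where the minimality of $dist^t$ and the fact that imbalance corrections depend only on the endpoint pairs $(p,q)$, not the particular walk, both get used. A careful statement would note that the multiset of endpoint pairs used by $G^m$ is itself a perfect match in the \emph{complete} bipartite graph on $(P,Q)$ restricted to pairs joined by \emph{some} terminal oriented walk, and that this is a subgraph of the $B$ built in STEP-3 (which, by definition of $E^b$, includes an edge for every such pair); hence $M_b'$ is a feasible perfect match in $B$ and its cost under the shortest-walk weights lower-bounds $M_b$'s cost. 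I would also double-check the edge case where $G$ is already Eulerian (then $V^+ = V^- = \emptyset$, the matching is empty, and both sides reduce to $\sum_{e\in E} c(e)$), so the statement as phrased — "has a cyclic CP walk" — covers it consistently with Lemmas~\ref{lem:etour} and~\ref{lem:multi-lemma}.
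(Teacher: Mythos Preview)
Your proposal is correct and follows essentially the same approach as the paper's own proof: both invoke Lemma~\ref{lem:family} to place the algorithm's output (and any competitor) in the family $\mathcal{F}$, use Lemma~\ref{lem:weight} for the weight decomposition, and then argue that choosing shortest terminal oriented walks in {\sf STEP-2} together with a min-cost perfect match in {\sf STEP-4} minimizes the second summand. The paper's proof is considerably terser---it does not separately verify feasibility via Lemma~\ref{lem:bi-match} and Lemma~\ref{lem:etour}, nor does it spell out the two-step lower bound $\sum (\text{chosen walk}) \ge \sum dist^t \ge \sum_{M_b} dist^t$ that you make explicit---but the skeleton is identical, and the ``obstacle'' you flag (that an arbitrary $G^m \in \mathcal{F}$ may use non-shortest walks, so one must first replace each walk by its shortest counterpart before comparing to the min-cost match in the {\sf STEP-3} graph) is exactly the step the paper leaves implicit.
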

\begin{proof}
By Lemma~\ref{lem:family} the multi-bi-directed graph $G^*(V,E^*)$ belongs to $\mathcal{F}$. Now by Lemma~\ref{lem:weight},
any optimal solution has to minimize the second term ($\sum_{(p^{(i)},q^{(j)}}dist^t(p,q)$). To minimize this the
algorithm chooses {\em shortest terminal oriented bi-directed walk} in {\sf STEP-2}. Finally, in {\sf STEP-5} the 
algorithm finds a minimum cost perfect match. Both these steps ensure that $\mathcal{W}(G^*)$ is minimum in the entire
family of multi-bi-directed graphs in $\mathcal{F}$.$\Box$
\end{proof}

\subsection{Runtime analysis of the algorithm to find a cyclic CP walk}
Let $p = \max\{|V^+|, |V^-|\}$ and $d_{max} = \displaystyle\max_{v\in V}\{|d_{in}(v)-d_{out}(v)\}$.
{\sf STEP-2} of the algorithm runs in $\Theta(p(|V|+|E|)\log(|V|))$ time to compute $dist^t(v,u)$. In 
{\sf STEP-3} $|P|\leq d_{max}p$ , $|Q|\leq d_{max}p$. For {\sf STEP-4} Hungarian method can be applied
to solve the weighted matching problem in $\Theta((d_{max}p)^3)$ time. So the total runtime of this
deterministic algorithm is $\Theta(p(|V|+|E|)\log(|V|) + (d_{max}p)^3)$. As mentioned before
if $p$ is much smaller than $|V|$ this algorithm performs better than the bi-directed flow algorithm.
\subsection{Runtime analysis of the algorithm to find SDDNA}
Since SDDNA runs on a bi-directed de Bruijn graph which is un-weighted, {\sf STEP-2} of the algorithm
runs in $\Theta(p(|V|+|E|))$ time -- because we don't need to use a Heap, we just do a BFS on the bi-directed
graph. The rest of the analysis for the runtime remains the same and the total run time of the algorithm 
is $\Theta(p(|V|+|E|)+(d_{max}p)^3)$.

\section{A $\Theta(p(|V|+|E|))$ Time Heuristic Algorithm for the SDDNA Problem}
\label{sec:gdy_algo}
From the analysis in the previous section, to solve the SDDNA problem deterministically we need to spend
$\Theta(p(|V|+|E|) + (d_{max}p)^3)$ time. However if we just replace the Hungarian method in {\sf STEP-4}
with a simple greedy algorithm we can get rid of the $(d_{max}p)^3$ term in the asymptotic complexity.
Although we have a constant $2/3-\epsilon$ approximation algorithm for {\em maximum weighted matching},
we are not aware of any constant approximation algorithms for {\em minimum weight perfect matching}.
As a result this just remains as a heuristic algorithm. On the other hand this algorithm seems to be
performing very close to the optimal (see Section~\ref{sec:gdy_heuristic}).

\section{Dealing with Practical Bi-directed de Bruijn Graphs with no Cyclic CP Walks}
\label{sec:no-cpp}
As we have mentioned earlier most of the bi-directed de Bruijn graphs constructed from the
reads do not satisfy the sufficient condition for cyclic CP walks. In such cases our algorithm
can still be used, by modifying it to find a {\em maximum} match in the balancing bi-partite graph  
rather than perfect match. We can introduce a hypothetical node $h$ and connect all the 
un-matched nodes in the balancing bi-partite graph to $h$ with appropriate bi-directed edges and
thus make all the original nodes balanced. We can now find a cyclic CP walk in this hypothetical 
graph. Every sub-walk in the cyclic CP walk that starts from $h$ and ends at $h$ can be reported
as a {\em contig}. Thus our algorithm is capable of handling cases when the bi-directed graph cannot have a cyclic CP walk.

\section{Experimental Results}
\label{sec:exp}
As we have mentioned in the previous sections the asymptotic complexity of our algorithm depends on 
$p$ -- the maximum of positively and negatively imbalanced nodes. In the case of de Bruijn
graphs $d_{max}\leq |\Sigma|$, where $|\Sigma|$ is the size the alphabet from which the strings
are drawn. In our case this is exactly four. So we can safely ignore $d_{max}$ in the case of de Bruijn
graphs and just concentrate on $p$. In the rest of the discussion we would like to refer to $p$
as the number of imbalanced nodes.
\subsection{Estimation of the mean of the random variable $\frac{p}{|V|}$}
It is clear that $p$ is a random variable with support in $[0,|V|]$. So we would like to 
estimate the expected number of imbalanced nodes in a graph with $|V|$ bi-directed edges. We
estimated the mean of the random variable $\frac{p}{|V|}$ from several samples of bi-directed
de Bruijn graphs constructed from reads from a plant genome. A simple $t-$test is applied to 
to estimate the $95\%$ confidence interval of $\frac{p}{|V|}$. See Table~\ref{tab:imbal} for
the details of the samples used. Notice that as we increase the size of $k$ (de Bruijn graph
order) from $21$ to $25$, the number of imbalanced nodes in columns corresponding to $|V^+|$
and $|V^-|$ reduces. This is because increasing $k$ reduces the number of edges which may reduce
the number of imbalanced nodes. On the other hand for a fixed value of $k$ the number of
imbalanced nodes increases consistently with the nodes. However the rate of growth is very
slow compared to the rate of growth of the number of nodes. Finally we use this evidence to hypothesize
that the number of imbalanced nodes in practical bi-directed graphs is only between $0.087\%$
to $0.133\%$ of the number of nodes in the graph, with a probability of $95\%$.
\subsection{Performance of the greedy heuristic and handling cases which do not have cyclic CP walks}
\label{sec:gdy_heuristic}
The greedy heuristic described in Section~\ref{sec:gdy_algo} has been compared with the optimal
{\em maximum match} with {\em minimum cost}. As we mentioned in Section~\ref{sec:no-cpp} many 
of these graphs do not contain cyclic CP walks so they do not have a perfect match. To cope with
this situation we treated the balancing bi-directed graph as a complete bi-directed graph, by
introducing a hypothetical edge with large cost whenever there is no edge between two nodes
in the original graph. Thus we just used the size of the match to compare the cost of the greedy
algorithm and the optimal algorithm to solve the matching problem. Table~\ref{tab:appx} gives
the details of the balancing bi-partite graph obtained from several read samples. As we have
mentioned before, to get the approximation ratio we treated a balancing bi-partite graph as a complete
bi-partite graph $K_{p,p}$. If $M_{opt}$ and $M_{gdy}$ are the sizes of {\em maximum} and {\em maximal}
matches then we treated the cost of hypothetical perfect match as $(p-|M_{opt}|)$ and $(p-|M_{gdy}|)$
and their ratio is used as approximation ratio. Finally from the evidence in Table~\ref{tab:appx}
we hypothesize that the approximation ratio for this is between $1.008$ and $1.016$ with a probability
of $95\%$.

\subsection{Implementation and Data}
An implementation of the algorithms discussed is available at \url{http://trinity.engr.uconn.edu/~vamsik/fast_cpp.tgz}.

\section{Conclusion and further research}
In this paper we have given an algorithm for cyclic Chinese Postman walk on a bi-directed de Bruijn graph. Our
algorithm is based on identifying shortest bi-directed walks and weighted matching. This algorithm performs
asymptotically better than the bi-directed flow algorithm when the number of imbalanced nodes are much smaller
than the nodes in the bi-directed graph. On the other hand this algorithm can also handle the instances of
bi-directed graphs which does not have a cyclic CP walk and provide a minimal set of walks, cyclic walks
which cover every edge in the bi-directed graph at least once. 

There are several research directions which can be pursued. Firstly, we need to address how the addition of
paired reads may impose new constraints on the cyclic CPP walk. Secondly, while Eulerization of the bi-directed graph
we have chosen the shortest path bi-directed path, however this may not correspond to the repeating region in 
the genome. Other strategy to make the graph Eulerian is to choose the path with maximum read multiplicity. 
This on other hand may increase the length of the Chinese walk, can we simultaneously optimize these two 
objectives ?.

\begin{table}
\begin{center}
\begin{tabular}{|c|c|c|c|c|c|c|c|c|}
\hline
READS &  $k$ & NODES & P-IMBAL & N-IMBAL & \multicolumn{3}{|c|}{BAL-BI-GRAPH} & \\
\hline
 &&&$|V^+|$&$|V^-|$&$|P|$ & $|Q|$ & $p$ & $\frac{p\times 100}{|V|}$ \\
\hline
102400 & 21 & 1588569 & 1157 & 1133 & 1186 & 1173 & 1186 & 0.075 \\ 
\hline
153600 & 21 & 2353171 & 2240 & 2141 & 2298 & 2211 & 2298 & 0.098 \\ 
\hline
204800 & 21 & 3097592 & 3509 & 3492 & 3601 & 3590 & 3601 & 0.116 \\ 
\hline
256000 & 21 & 3825101 & 4953 & 5004 & 5074 & 5131 & 5131 & 0.134 \\ 
\hline
307200 & 21 & 4538734 & 6719 & 6748 & 6878 & 6912 & 6912 & 0.152 \\ 
\hline
358400 & 21 & 5235821 & 8586 & 8603 & 8789 & 8802 & 8802 & 0.168 \\ 
\hline
409600 & 21 & 5917489 & 10665 & 10693 & 10914 & 10934 & 10934 & 0.185 \\ 
\hline
102400 & 25 & 1202962 & 569 & 521 & 588 & 540 & 588 & 0.049 \\ 
\hline
153600 & 25 & 1788533 & 1104 & 1026 & 1139 & 1062 & 1139 & 0.064 \\ 
\hline
204800 & 25 & 2362981 & 1744 & 1708 & 1788 & 1759 & 1788 & 0.076 \\ 
\hline
256000 & 25 & 2927656 & 2521 & 2523 & 2579 & 2592 & 2592 & 0.089 \\ 
\hline
307200 & 25 & 3484849 & 3370 & 3414 & 3451 & 3517 & 3517 & 0.101 \\ 
\hline
358400 & 25 & 4032490 & 4333 & 4369 & 4441 & 4485 & 4485 & 0.111 \\ 
\hline
409600 & 25 & 4571554 & 5390 & 5467 & 5518 & 5613 & 5613 & 0.123 \\ 
\hline
\multicolumn{9}{|c|}{}\\ 
 \multicolumn{9}{|c|}{$\left[\bar{x}-z_{\frac{\alpha}{2}}\frac{S}{\sqrt{n}}\,,\,\bar{x}+z_{-\frac{\alpha}{2}}\frac{S}{\sqrt{n}} \right]$ :  95\% C.I for average $\frac{p\times 100}{|V|}$ is $[0.0872\%,0.1330\%]$} \\ 
\hline
\end{tabular}

\end{center}
\caption{The value of $p$ on short read data from a plant genome sequencing data from CSHL}
\label{tab:imbal}
\end{table}

\begin{table}
\begin{center}
\begin{tabular}{|c|c|c|c|c|c|c|c|c|}
\hline
READS &  $k$ & NODES & $\max{|P|, |Q|}$ &\multicolumn{2}{|c|}{OPT} & \multicolumn{2}{|c|}{GRDY} & APX-RATIO\\ 
\hline
 & & & &SIZE&COST&SIZE&COST \\
\hline
 &&$|V|$&$p$&$|M_{opt}|$ & $p-|M_{opt}|$ & $|M_{gdy}|$ & $p-|M_{gdy}|$ & $\frac{GDY}{OPT}$ \\
\hline
102400 & 21 & 1202962 & 1186 & 416 & 770 & 406 & 780 & 1.0130 \\ 
\hline
153600 & 21 & 1788533 & 2298 & 725 & 1573 & 704 & 1594 & 1.0134 \\ 
\hline
204800 & 21 & 2362981 & 3601 & 1092 & 2509 & 1073 & 2528 & 1.0076 \\ 
\hline
256000 & 21 & 3825101 & 5131 & 1479 & 3652 & 1450 & 3681 & 1.0079 \\ 
\hline
307200 & 21 & 4538734 & 6912 & 1929 & 4983 & 1879 & 5033 & 1.0100 \\ 
\hline
358400 & 21 & 5235821 & 8802 & 2385 & 6417 & 2329 & 6473 & 1.0087 \\ 
\hline
409600 & 21 & 5917489 & 10934 & 2876 & 8058 & 2814 & 8120 & 1.0077 \\ 
\hline
102400 & 25 & 1202962 & 588 & 152 & 436 & 147 & 441 & 1.0115 \\ 
\hline
153600 & 25 & 1788533 & 1139 & 281 & 858 & 274 & 865 & 1.0082 \\ 
\hline
204800 & 25 & 2362981 & 1788 & 438 & 1350 & 429 & 1359 & 1.0067 \\ 
\hline
256000 & 25 & 2927656 & 2592 & 619 & 1973 & 601 & 1991 & 1.0091 \\ 
\hline
307200 & 25 & 3484849 & 3517 & 809 & 2708 & 783 & 2734 & 1.0096 \\ 
\hline
358400 & 25 & 4032490 & 4485 & 995 & 3490 & 966 & 3519 & 1.0083 \\ 
\hline
409600 & 25 & 4571554 & 5613 & 1220 & 4393 & 1175 & 4438 & 1.0102 \\ 
\hline
\multicolumn{9}{|c|}{}\\ 
 \multicolumn{9}{|c|}{$\left[\bar{x}-z_{\frac{\alpha}{2}}\frac{S}{\sqrt{n}}\,,\,\bar{x}+z_{-\frac{\alpha}{2}}\frac{S}{\sqrt{n}} \right]$ :  95\% C.I for average $\frac{APX-COST}{OPT-COST}$ is $[1.0083\%,1.0106\%]$} \\ 
\hline
\end{tabular}

\end{center}
\caption{Approximation ratio of the {\sf GDY} heuristic}
\label{tab:appx}
\end{table}

\noindent {\bf Acknowledgements.} This work has been supported in
part by the following grants: NSF 0326155, NSF 0829916 and NIH
1R01GM079689-01A1.

\section*{Appendix}

\begin{table}
\begin{center}
\begin{tabular}{|llllll|}
\hline
\multicolumn{6}{|c|}{\text{iteration:$1$ , $\min\equiv dist^-[2]=1$}}\\
\hline
&$dist^+[2]=\infty$ &$dist^+[3]=\infty$ &$dist^+[4]=\infty$ &$dist^+[5]=\infty$ &$dist^+[6]=\infty$ \\
\hline 
&$dist^-[2]=1$ &$dist^-[3]=\infty$ &$dist^-[4]=\infty$ &$dist^-[5]=\infty$ &$dist^-[6]=\infty$ \\
\hline
\hline
\multicolumn{6}{|c|}{\text{iteration:$2$ , $\min\equiv dist^-[5]=2$}}\\
\hline
&$dist^+[2]=\infty$ &$dist^+[3]=\infty$ &$dist^+[4]=\infty$ &$dist^+[5]=\infty$ &$dist^+[6]=\infty$ \\
\hline 
&$dist^-[2]=1$ &$dist^-[3]=2$ &$dist^-[4]=\infty$ &$dist^-[5]=2$ &$dist^-[6]=\infty$ \\
\hline
\hline
\multicolumn{6}{|c|}{\text{iteration:$3$ , $\min\equiv dist^-[3]=2$}}\\
\hline
&$dist^+[2]=\infty$ &$dist^+[3]=3$ &$dist^+[4]=\infty$ &$dist^+[5]=\infty$ &$dist^+[6]=\infty$ \\
\hline 
&$dist^-[2]=1$ &$dist^-[3]=2$ &$dist^-[4]=\infty$ &$dist^-[5]=2$ &$dist^-[6]=\infty$ \\
\hline
\hline
\multicolumn{6}{|c|}{\text{iteration:$4$ , $\min\equiv dist^-[6]=3$}}\\
\hline
&$dist^+[2]=\infty$ &$dist^+[3]=3$ &$dist^+[4]=\infty$ &$dist^+[5]=3$ &$dist^+[6]=\infty$ \\
\hline 
&$dist^-[2]=1$ &$dist^-[3]=2$ &$dist^-[4]=\infty$ &$dist^-[5]=2$ &$dist^-[6]=3$ \\
\hline
\hline
\multicolumn{6}{|c|}{\text{iteration:$5$ , $\min\equiv dist^-[5]=3$}}\\
\hline
&$dist^+[2]=\infty$ &$dist^+[3]=3$ &$dist^+[4]=\infty$ &$dist^+[5]=3$ &$dist^+[6]=\infty$ \\
\hline 
&$dist^-[2]=1$ &$dist^-[3]=2$ &$dist^-[4]=4$ &$dist^-[5]=2$ &$dist^-[6]=3$ \\
\hline
\hline
\multicolumn{6}{|c|}{\text{iteration:$6$ , $\min\equiv dist^-[3]=3$}}\\
\hline
&$dist^+[2]=4$ &$dist^+[3]=3$ &$dist^+[4]=\infty$ &$dist^+[5]=3$ &$dist^+[6]=\infty$ \\
\hline 
&$dist^-[2]=1$ &$dist^-[3]=2$ &$dist^-[4]=4$ &$dist^-[5]=2$ &$dist^-[6]=3$ \\
\hline
\hline
\multicolumn{6}{|c|}{\text{iteration:$7$ , $\min\equiv dist^-[4]=4$}}\\
\hline
&$dist^+[2]=4$ &$dist^+[3]=3$ &$dist^+[4]=\infty$ &$dist^+[5]=3$ &$dist^+[6]=\infty$ \\
\hline 
&$dist^-[2]=1$ &$dist^-[3]=2$ &$dist^-[4]=4$ &$dist^-[5]=2$ &$dist^-[6]=3$ \\
\hline
\hline
\multicolumn{6}{|c|}{\text{\sf The shortest path from node $1$ to node $4$ is of length $4$}} \\
\hline
\end{tabular}

\end{center}
\caption{Changes in the $dist^+$ and $dist^-$ labels in every iteration of Algorithm~\ref{algo:shortest_path} on
the bi-directed graph in Figure~\ref{fig:bi-walk-example}}
\label{tab:bi-example}
\end{table}

\bibliographystyle{splncs}

\end{document}